\DeclareMathOperator*{\E}{\mathbb{E}}
\newtheorem{theorem}{Theorem}
\newcommand{\system}{{\em Dolos}}
\newcommand{\para}[1]{{\vspace{4pt} \bf \noindent #1 \hspace{6pt}}}
\newcommand{\abedit}[1]{{\color{black} #1}}
\newcommand{\shawn}[1]{{\color{black} #1}}
\newcommand{\htedit}[1]{{\color{black} #1}}
\newcommand{\bzedit}[1]{{\color{black} #1}}
\newenvironment{packed_itemize}{
\begin{list}{\labelitemi}{\leftmargin=1em}
  \setlength{\itemsep}{2pt}
  \setlength{\parskip}{0pt}
  \setlength{\parsep}{0pt}
  \setlength{\headsep}{0pt}
  \setlength{\topskip}{0pt}
  \setlength{\topmargin}{0pt}
  \setlength{\topsep}{0pt}
  \setlength{\partopsep}{0pt}
}{\end{list}}
\newcommand{\eg}{{e.g.,\ }}
\newcommand{\ie}{{i.e., }}
\newcommand{\etal}{{\em et al.\ }}
\newcommand{\datasetsmall}{{\tt Sirinam}}
\newcommand{\datasetlarge}{{\tt Rimmer}}
\newcommand{\dflarge}{{\tt DF$_{\text{\tt Rimmer}}$}}
\newcommand{\dfsmall}{{\tt DF$_{\text{\tt Sirinam}}$}}
\newcommand{\varlarge}{{\tt VarCNN$_{\text{\tt Rimmer}}$}}
\newcommand{\varsmall}{{\tt VarCNN$_{\text{\tt Sirinam}}$}}
\newcommand{\secspace}{\vspace{0in}}
\newfont{\mycrnotice}{ptmr8t at 7pt}
\newfont{\myconfname}{ptmri8t at 7pt}
\begin{document}

\title{A Real-time Defense against Website Fingerprinting Attacks}
\author{
{\rm Shawn Shan}\\
University of Chicago \\
\and
{\rm Arjun Nitin Bhagoji}\\
University of Chicago
\and
{\rm Haitao Zheng}\\
University of Chicago
\and
{\rm Ben Y.\ Zhao}\\
University of Chicago
}

\maketitle

\thispagestyle{plain}

\begin{abstract}
  Anonymity systems like Tor are vulnerable to Website Fingerprinting (WF)
  attacks, where a local passive eavesdropper infers the victim's
  activity. Current WF attacks based on deep learning classifiers have
  successfully overcome numerous proposed defenses. While recent defenses
  leveraging adversarial examples offer promise, \htedit{these adversarial examples} can only be
  computed 
  after the network session has concluded, thus offer users little protection
  in practical settings.

  We propose \system{}, a system that modifies user
  network traffic in real time to successfully evade WF attacks.
  \htedit{\system{} injects dummy packets into  traffic traces} by computing
  \textit{input-agnostic adversarial patches} that disrupt deep learning
  classifiers used in WF attacks. \htedit{Patches} are then applied to alter
  and protect user traffic in real time. Importantly, these \htedit{patches} are
  parameterized by a user-side secret, ensuring that attackers
  cannot use adversarial training to defeat \system{}. We experimentally
  demonstrate that \system{} provides 94+\% protection against
  state-of-the-art WF attacks under a variety of settings.  Against prior defenses, \system{} outperforms in terms of higher
  protection performance and lower information leakage and bandwidth overhead.
  Finally, we show that \system{} is robust against a variety of adaptive
  countermeasures to detect or disrupt the defense.
\end{abstract}

\vspace{-0.1in}
\section{Introduction}
\label{sec:intro}
\vspace{-0.1in}

Website-fingerprinting (WF) attacks are traffic analysis attacks that allow
eavesdroppers to identify websites visited by a user, despite their use of
privacy tools such as VPNs or the Tor anonymity
system~\cite{wang2014effective, hayes2016k}. The attacker identifies webpages
in an encrypted connection by analyzing and recognizing network traffic
patterns. These attacks have grown more powerful over time, improving in
accuracy and scale. The most recent variants can overwhelm existing defenses,
by training deep neural network (DNN) classifiers to identify the destination website
given a network trace.  In real world settings, WF attacks have proven
effective in identifying traces in the wild from a large number of candidate
websites using limited data~\cite{bhat2019var,sirinam2019triplet}.

There is a long list of defenses that have been proposed and then later
defeated by DNN based WF attacks. First, a class
of defenses obfuscate traces by introducing randomness~\cite{gong2020zero,
  juarez2016toward,dyer2012peek,wang2017walkie}. These obfuscation\abedit{-based} defenses
have been proven ineffective (< 60\% \abedit{protection}) against DNN based
attacks~\cite{bhat2019var,sirinam2018deep}. Other defenses have proposed
randomizing HTTP requests~\cite{cherubin2017bayes} or scattering traffic
redirection across different Tor
nodes~\cite{de2020trafficsliver,henri2020protecting}. Again, these defenses
provide poor protection against DNN-based attacks (<50\%
protection). Unsurprisingly, the success of DNN attacks have derailed efforts
to deploy WF defenses on Tor (\eg Tor stopped the implementation of WTF-PAD
after it was broken by the DF
attack~\cite{sirinam2018deep,wtfpad2015takedown})

Against these strong DNN attacks, the only defenses to show promise are recent
proposals that apply adversarial examples to \htedit{mislead} WF classification
models~\cite{imani2019mockingbird,hou2020wf}. Adversarial examples are known
weaknesses of DNNs, where a small, carefully tuned change to an input can
dramatically alter the DNN's output. These vulnerabilities have been
studied intensely in the adversarial ML community, and now are generally regarded
a fundamental property of DNNs~\cite{bugs-features,shafahi2018adversarial}. 

Unfortunately, defenses based on adversarial examples have one glaring
limitation. To be effective, adversarial examples must be crafted
individually for each input~\cite{moosavi2017universal}. In the WF context, the ``input'' is the entire
network traffic trace. Thus a defense built using adversarial examples
requires the entire traffic trace to be completed, before it can compute the
precise perturbation necessary to mislead the attacker's WF classifier. This
is problematic, since real world attackers will observe user traffic in
real time, and be unaffected by a defense that \htedit{can only take action
after the fact. }



In this paper,  we propose {\system{}}, a practical and effective defense
against WF attacks that can be applied to network traffic in real-time. The
key insight in \system{} is the application of the concept of
\textit{trace-agnostic patches} to WF defenses, derived from the concept of
adversarial patches. While adversarial examples are customized for each
input, adversarial patches can cause misclassifications when applied to a
wide range of input values.  In the context of a WF defense, ``patches'' are
pre-computed sequences of \htedit{dummy} packets that protect all network traces of visits
to a specific website. A patch is applied to an active, ongoing network
connection, i.e. in ``real time.'' As we will show, \system{} generates
patches parameterized by \shawn{a user-side secret such as private keys}. Unlike traditional
patches or universal perturbations, patches parameterized by a user's
secret cannot be
overcome by attackers, unless
they compromise the user's secret.




Our work describes experiences in designing and
evaluating \system{}, and makes four key contributions:
\begin{packed_itemize} \vspace{-0.06in}
\item We propose \system{}, a new WF defense that is highly effective against
  the strongest attacks using deep learning classifiers. More importantly,
  \system{} precomputes patches before the network connection, and applies
  the patch to protect the user in real time.  
\item We introduce a \shawn{secret-based} parameterization mechanism for adversarial
  patches, which allows a user to generate randomized patches that cannot be
  \htedit{correctly} guessed without knowledge of the \shawn{user-side secret}. We show that this gives \system{}
  strong resistance to \htedit{WF attacks that apply adversarial training
    with known adversarial 
    patches and perturbations},  which can otherwise defeat
  patches and universal perturbations~\cite{nasr2020blind}.
\item We evaluate \system{} under a variety of settings. Regardless of how
  the attacker trains its classifier (handcrafted features, deep learning, or
  adversarial training), \system{} provides 94+\% protection against
  the attack.  Compared to three state-of-the-art 
  defenses~\cite{gong2020zero,imani2019mockingbird,juarez2016toward}, 
  \system{} significantly outperforms all three in key metrics: overhead, 
  protection performance, and information leakage. 
  \item Finally, we consider attackers with full knowledge of our 
  defense (\ie full access to source code but not \shawn{the user secret}), and demonstrate that \system{} is
  robust against multiple adaptive attacks and countermeasures. 
  \vspace{-0.06in}
\end{packed_itemize}


\section{Web Fingerprinting Attacks and Defenses}
\label{sec:back}


\begin{figure}[t]
  \centering
  \includegraphics[width=1.0\linewidth]{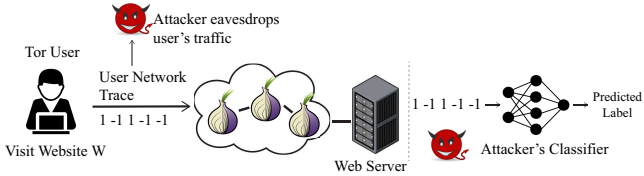}
  \caption{A WF attacker, positioned between user and the Tor network,
    eavesdrops on user network traffic. After the connection terminates, the
    attacker classifies the entire network trace using a pretrained WF attack
    classifier.}
  \label{fig:system_overview}
\end{figure}

\subsection{Website Fingerprinting Attacks}
\label{sec:attack}

In WF attacks, an attacker tries to identify the destination of user traffic
routed through an encrypted connection to Tor or a VPN. The attacker
passively eavesdrops on the user connection, and once the session is complete,
feeds the network trace as input to a \htedit{machine learning} classifier to identify the
website visited. Even though packets across the connection are encrypted and padded
to the same size, attackers can distinguish traces as sequences of packets
and their direction. The attacker's \htedit{machine learning} classifier is trained on
packet traces generated by visiting a large, pre-determined set of websites
beforehand. 

\para{\htedit{Attacks via Hand-crafted Features.}} Panchenko 
\etal~\cite{panchenko2011website} 
proposed the first effective WF attack against 
Tor traffic using a support vector machine with hand-crafted 
features. Follow-up work proposed stronger 
attacks~\cite{cai2012touching,hayes2016k,
herrmann2009website,liberatore2006inferring,lu2010website,
panchenko2016website,sun2002statistical,wang2014effective} by improving 
the feature set and using different 
classifier architectures. The most effective 
attacks based on hand crafted features such as
k-NN~\cite{wang2014effective}, CUMUL~\cite{panchenko2016website}
, and k-FP~\cite{hayes2016k} achieve
over $90\%$ accuracy in identifying websites 
based on network traces.

\para{\htedit{Attacks via DNN Classifiers.} }
Recent 
work~\cite{sirinam2018deep,bhat2019var,abe2016fingerprinting,Rimmer2018}
leverages 
deep neural networks (DNNs) to perform more powerful WF attacks. 
DNNs automatically extract features 
from raw network traces, and outperform 
previous WF attacks based on hand-crafted features. 
Two of the most successful DNN-based attacks are 
\textit{Deep Fingerprinting (DF)}~\cite{sirinam2018deep} 
and \textit{VarCNN}~\cite{bhat2019var}. \textit{DF} 
leverages a deep convolutional neural 
network for classification, and reaches 
over $98\%$ accuracy on undefended traces, or traces 
defended using existing defenses 
(\S\ref{sec:defenses}). \textit{Var-CNN}~\cite{bhat2019var} 
further improves the attack performance by using a
large residual network architecture, and can 
achieve high performance even with limited training 
data. Later in \S\ref{sec:eval}, our experiments show
that \system{} is effective at defending against both 
of these state-of-the-art attacks, as well as against 
those using hand-crafted features.

\subsection{Website Fingerprinting Defenses}
\label{sec:defenses}

WF defenses modify (add, delay, or reroute)
packets to prevent the identification of the 
destination website using  trace analysis.
Broadly speaking, defenses either obfuscate traces 
based on expert-designed noise heuristics or leverage adversarial 
perturbations to evade \htedit{machine learning} classifiers. 

\para{Defenses via Trace Obfuscation.} A number of defenses aim to obfuscate
traces to increase the difficulty of classification.  This obfuscation is
performed either at the application layer or the network layer. Since these
defenses do not specifically target the attacker's classifier, they generally
afford low protection (<60\%) against state-of-the-art WF attacks.

In \textit{application layer obfuscation}, the defender introduces randomness
into HTTP requests or the Tor routing algorithm (\eg
\cite{cherubin2017website, de2020trafficsliver,henri2020protecting}).
Application layer defenses generally make strong assumptions that are often
unrealistic in practice, such as target websites implementing customized HTTP
protocols~\cite{ cherubin2017website} or only allowing attackers to observe
traffic at a single Tor entry
node~\cite{de2020trafficsliver,henri2020protecting}.  These defenses provide
less than 60\% protection against DNN-based attacks such as \textit{DF} and
\textit{Var-CNN}~\cite{sirinam2018deep,bhat2019var}.

In \textit{network layer obfuscation}, 
the defender inserts dummy packets into network traces to make website 
identification more challenging. First, early defenses~\cite{dyer2012peek,
cai2014cs,cai2014systematic} used constant-rate 
padding to reduce the information leakage caused by 
time gaps and traffic volume. However, these methods led 
to large bandwidth overhead ($> 150\%$). Second\abedit{ly}, more recent work 
reduces the overhead by inserting packets
in locations of the trace with large time gaps between packets 
(WTF-PAD~\cite{juarez2016toward}) or focusing on \abedit{the}
front portion of the trace, which has \abedit{been} shown to contain 
the most information (FRONT~\cite{gong2020zero}). 
However, WTF-PAD and FRONT only achieve $9\%$ and $28\%$ 
\abedit{protection} success respectively against strong WF attacks. 
Finally, supersequence defenses~\cite{wang2014effective,
nithyanand2014glove,wang2017walkie} try to 
find a \textit{supersequence}, which is a longer packet
trace that contains subsequences of 
different websites' traces. The strongest supersequence 
defense, Walkie-Talkie, achieves $50\%$ protection against
DNN attacks. Overall, against strong WF attacks, network layer obfuscation defenses
either induce extremely large overheads (> 100\%) or 
offer low protection (< 60\%). 

\para{Defenses via Adversarial Perturbation.}  Goodfellow \etal first
proposed evasion attacks against DNNs, where an attacker causes a DNN to
misclassify inputs by adding small, \textit{adversarial
  perturbations}~\cite{goodfellow2014explaining} to them. Such attacks have
been widely studied in many domains, \eg computer
vision~\cite{carlini,kurakin2016adversarial,
  chen2018ead,uesato2018adversarial,chen2018shapeshifter}, natural language
processing~\cite{zhang2020adversarial, ebrahimi2017hotflip}, and malware
detection~\cite{grosse2017adversarial, suciu2019exploring}. Recent WF
defenses~\cite{imani2019mockingbird,hou2020wf} use adversarial
perturbations to defeat DNN-based attacks.  

The challenge \htedit{facing} adversarial perturbation-based WF defense is that
adversarial perturbations are computed \abedit{with respect to} a given input, which in
the WF context is the full network trace. Thus, computing the adversarial
perturbation necessary to protect a network connection requires the defender
to know the \emph{entire trace} before the connection is even made.
This limitation renders \htedit{the defense} impractical to protect user
traces in real-time.

\bzedit{Mockingbird~\cite{imani2019mockingbird} suggests that we can use a
  database of recent traces to compute perturbations, and use them on new
  traces. Yet it is widely accepted in ML literature that adversarial
  perturbations are input specific and rarely
  transfer~\cite{moosavi2017universal}. We confirm this experimentally using
  Mockingbird's publicly released code: for the same website, perturbations
  calculated on one trace offer an average of $18\%$ protection to a
  different trace from the same website. We tested 10 pairs of traces per
  website, over 10 websites randomly sampled from the same dataset used by
  Mockingbird~\cite{imani2019mockingbird}.}

In a concurrent manuscript, Nasr \etal~\cite{nasr2020blind} propose \abedit{solving}
this limitation by precomputing {\em universal adversarial
  perturbations} for unseen traces, enabling the protection of live traces. 
However, an attacker aware of this defense can also compute these universal
perturbations, and adversarially train their models against them to improve
robustness. This countermeasure causes a significant drop in its protection to
$76\%$. We show that our defense outperforms~\cite{nasr2020blind}(\S\ref{sec:eval_prev}).

\begin{figure}
  \centering
  \includegraphics[width=1\linewidth]{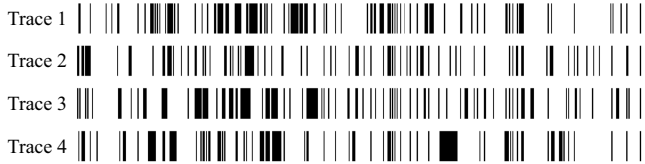}
  \caption{Example traces observed by a WF attacker
    when a Tor user $u$ visits the same 
    website.   Each sample trace ($x_u$) records the packet direction (black
    bar: incoming packet,
    white bar: outgoing packet) of the first 500 packets in this
    session. We see that while the Tor user visits the same website in these four
    sessions, the traces observed
    by the attacker vary across sessions. 
  }
  \label{fig:traces}
\end{figure}

\section{Problem and Threat Model}
\label{sec:methodology}
We consider the problem of defending against website fingerprinting
attacks. A user $u$ wishes to use the Tor network to visit
websites privately. An attacker, after eavesdropping on $u$'s traffic
and collecting a trace of $u$'s website visit ($x_u$),  attempts to
use $x_u$ to determine (or classify) the destination website that $u$ has just
visited.  To defend against such attacks, the defender seeks to inject
``obfuscation'' traffic into
the Tor network, such that  the traffic trace observed by the attacker  ($\hat{x_u}$) will
lead to a wrong classification result.


\para{Threat Model.} 
\label{subsec:threat_model}
We use the same threat model adopted by existing WF attacks and 
defenses.

\begin{packed_itemize} 
  
\item The attacker, positioned between the 
user and Tor guard nodes,  can only observe the 
user's network traffic but not
modify it.  Furthermore, the attacker can tap into user connections over 
a period of time and may see traffic on multiple website
requests.

\item We consider WF attacks that operate on packet
directions\footnote{ Since Tor's traffic is encrypted and padded to 
the same size~\cite{herrmann2009website}, only
packet directions and time gaps will leak information about 
the destination website.}.  This assumption is consistent with
many previous defenses~\cite{imani2019mockingbird,
  wang2017walkie,de2020trafficsliver,henri2020protecting,hou2020wf}. For each $u$'s website session, the attacker collects its trace
  $x_u$ as a sequence of packet directions (i.e., marking each
  outgoing packet as +1 and each incoming packet as
  -1). Figure~\ref{fig:traces} shows four examples when a Tor user visits the
  same website at different times.  While the Tor user visits the same website in these four sessions, the traces observed by the attacker vary across sessions.

\item We consider \textit{closed-world} 
WF  attacks where
the user only visits a set of websites 
known to the attacker. These attacks are 
strictly stronger than \textit{open-world}
attacks where the user can visit any website. The latter is far more
challenging for the attacker. 


\item We assume the attacker has full knowledge of the defense method deployed by
  the user and/or Tor (\ie access to source code). The attacker knows that
  the user traffic is protected by obfuscation, but does not know the
  run-time parameters (\ie user-side secret) used to construct the obfuscation
  traffic.

\end{packed_itemize}

\para{Defender Capabilities.}  We make two assumptions about the defender.

\begin{packed_itemize} \vspace{-0.06in}
\item The defender can 
actively inject dummy 
packets into user traffic. With cooperation from both the
user and a node in the Tor circuit (i.e., a Tor bridge\footnote{Tor bridges are often 
used to generate incoming packets for WF
defenses~\cite{juarez2016toward,imani2019mockingbird,gong2020zero,
  nasr2020blind}.}), the defender
can inject packets in both directions (outgoing and incoming). 

\item The defender has no knowledge of the WF classifier used by the
  attacker.


  \vspace{-0.06in}
\end{packed_itemize}

\para{Success Metrics.}  To successfully protect users against WF attacks, a
WF defense needs to meet multiple criteria. First, it should successfully
defend the most powerful known attacks (i.e., those based on DNNs), and reduce
their attack success to near zero. Second, it should do so without adding
unreasonable overhead to users' network traffic. Third, it needs to be
effective in realistic deployments where users cannot predict
the order of packets in a real-time network flow. Finally, a defense should
be robust to adaptive attacks designed with full knowledge of
the defense (i.e., with source code access).

Earlier in \S\ref{sec:defenses} we provide a detailed summary of existing
WF defenses and their limitations with respect to the above success metrics.

\section{ A New Defense using Adversarial Patches}
\label{sec:concept}
In this paper, we present \system{}, a new WF defense that exploits \abedit{the}
inherent weaknesses of deep learning classifiers to provide a highly
effective defense that protects network \abedit{traces} in real time, resists a variety
of countermeasures, and incurs low overhead compared to existing
defenses. 
In the following, we describe the key concepts behind \system{} and its design
considerations.  We present the detailed design
of \system{} later in \S\ref{sec:design}.

\vspace{-0.1in}
\subsection{Background: Adversarial Patches}
Our new WF defense is  inspired by the novel concept of {\em adversarial
patches}~\cite{brown2017adversarial} in computer vision.

Adversarial patches are a special form of artifacts, which when added to the
input of a deep learning classifier, cause an input to be misclassified.
Adversarial patches differ from adversarial perturbations in that patches are
both {\em input-agnostic} and {\em location-agnostic}, i.e., a patch causes
misclassification when applied to an input, regardless of the value of the
input or the location where the patch is applied \footnote{\abedit{We note that while input-dependent patches have been considered \cite{karmon2018lavan}, they are largely utilized and analyzed in an input agnostic context.}}.  Thus adversarial
patches are ``universal,'' and can be pre-computed without full knowledge of
the input. Existing works have already developed adversarial patches to
bypass facial recognition classifiers~\cite{wallace2019universal,
  wu2019making,song2018physical}.

In computer vision, an adversarial patch is formed as a fixed size pattern on
the image~\cite{brown2017adversarial}.  Figure~\ref{fig:adv_patch} shows an
example of an adversarial patch next to an example of adversarial
perturbation.  Given knowledge of the target classifier, one can search for
adversarial patches under specific constraints such as patch color or
intensity.  For instance, for a DNN model ($\mathbb{F}$), an targeted
adversarial patch $p_{adv}$ is computed via the following optimization:
\begin{equation}
   p_{adv}=\text{argmin}_{p} \E_{x\in \mathcal{X},l\in L}
   loss\left(\mathbb{F}(\Pi(p,x,l)), y_t\right) 
   \label{eq:adv_patch}
\end{equation}
where $\mathcal{X}$ is the set of training images, $L$ is a 
distribution of locations in the image, $y_t$ is the target label, 
and $\Pi$ is the function
that applies the patch to a random location of an image. 
This optimization is performed over all 
training images in order to make the patch effective 
across images.

\begin{figure}[t]
  \centering
  \includegraphics[width=0.45\textwidth]{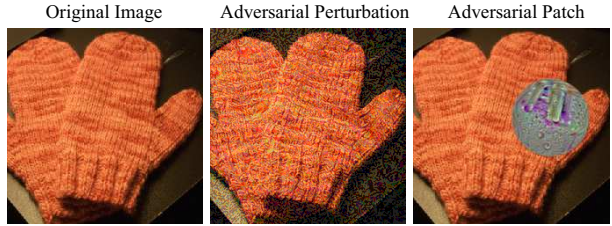}
  \caption{Sample images showing the difference between adversarial
    perturbations and adversarial patches.} 
  \label{fig:adv_patch}
\end{figure}
\subsection{Adversarial Patches as a WF Defense}

We propose to defend against WF attacks by adding {\em adversarial traffic
  patches} to user traffic traces, causing attackers to misclassify
destination websites.  To the best of our knowledge, our work is the first to
use adversarial patches to defend against WF attacks.


\vspace{1pt}
\noindent {\bf Trace-agnostic.} This is the key property of adversarial
patches and why they are a natural defense against WF attacks.  Given a user
$u$ and a website $W$, one can design a patch that works on any
network trace produced when $u$ visits $W$. We note that like adversarial
perturbations, patches can be computed as \abedit{the solution to a} constrained optimization problem. Empirical studies have not found any limitations in the number of
unique patches that can be computed for any targeted misclassification task.

Leveraging this property, the defender can pre-compute, for $u$, a set of
$W$-specific adversarial patches.  Once $u$ starts to visit the site $W$, the
defender fetches a pre-computed patch and injects, in real-time, the
corresponding dummy packets into $u$'s live traffic.  Furthermore, since
patches are built using diverse training data, they are inherently robust
against moderate levels of website updates and/or network dynamics. The
defender periodically re-computes new patches or when they detect significant
changes in website content and/or user networking environments.

\begin{figure*}
  \centering
  \includegraphics[width=0.82\linewidth]{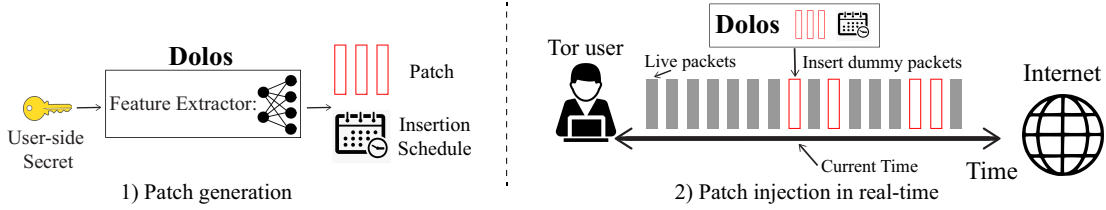}
  \vspace{-0.05in}
  \caption{Our proposed \system{} system that protects a user $u$ from WF 
    attacks. \emph{First}, \system{} precomputes a patch and its 
    packet insertion schedule  to protect $u$'s visits to website $W$, using $u$'s secret and a feature extractor. 
    \emph{Second}, when $u$ is visiting website $W$, \system{} defends
    user's traces in real-time by inserting dummy packets according to the precomputed patch and schedule. }
  \label{fig:defense_overview}
\end{figure*}

\subsection{Adversarial Patches for Network Traffic}
\label{subsec:challenges}

Here, we describe new design considerations that arise from applying
adversarial patches to network traffic traces. 

\htedit{For a specific user/website pair (user $u$, website $W$), the defender first uses
some sample traces of $u$ visiting $W$ to compute a patch $p$ for the pair. }At run-time, when $u$
initiates a connection to $W$, the defender fetches $p$ and follows the
corresponding insertion schedule to add dummy packets into $u$'s Tor
traffic. No original packets are dropped or modified.

Therefore, to generate
adversarial patches for traffic traces, we follow the optimization process
defined by Eq. (\ref{eq:adv_patch}), but change the $\Pi$ function for patch
injection.  \htedit{Note that the patches are generated for each specific user and
website pair ($u,W$). } Another key change is the ``perturbation budget'' (i.e., the
maximum changes to the input), which is now defined as the bandwidth overhead
introduced by those dummy packets.  Patches for images are often limited by a
small perturbation budget (e.g., $<5\%$), because bigger patches are
obvious to the human eye. Patches for traffic traces can support much larger
budgets.  For reference, Tor had deployed WTF-PAD as a WF defense, which
incurs $50+\%$ bandwidth overhead~\cite{sirinam2018deep,wtfpad2015takedown}.

\para{Strong Model Transferability of Patches.} 
Here {\em model transferability} refers to the well-known phenomenon that ML
classifiers trained for similar tasks share similar behaviors and
vulnerabilities, even if they are trained on different architectures or
training data~\cite{transfer2014}. Existing work has shown that when applied
to an input, an adversarial perturbation or patch computed for a given DNN
model will transfer across models~\cite{demontis2019adversarial,
  suciu2018does,petrov2019measuring}. In addition, a perturbed or patched
input will also transfer to non-DNN models such as SVM and random
forests~\cite{demontis2019adversarial,charles2019geometric}.  The level of
transferability is particularly strong for large perturbation sizes~\cite{demontis2019adversarial,petrov2019measuring}, as is
the case for adversarial patches for network
traffic.

Leveraging this strong model transferability, we can build a practical WF patch
without knowing the actual classifier used by WF attackers.  The
defender can compute adversarial patches using local WF models, and they
should succeed against attacks using other WF classifiers.

\subsection{Proposed: Secret-based Patch Generation}
A standard response by WF attackers to our patch-based defenses is to take patches generated
by the defense, build and label patched user traces, and use these traces to
(re)train an attack classifier to bypass the defense. In the adversarial ML
literature, this is termed ``adversarial training,'' and is regarded as the
most reliable defense against adversarial attacks, including adversarial
perturbations, adversarial patches, and universal perturbations.

In our experiments, we find that existing WF defenses fail under such attacks.
For~\cite{de2020trafficsliver}, protection drops to 55\%.  Noteworthy is the
fact that \cite{nasr2020blind}, which uses universal perturbations, is
particularly vulnerable to this attack. An attacker aware of the defense can
compute the universal perturbation themselves, and use it for adversarial
training. Our results show that the protection rate of \cite{nasr2020blind} under adversarial training drops dramatically
from 92\% to 16\%.

To resist adaptive attackers using adversarial training, we propose a
novel secret-based patch generation mechanism that makes it nearly impossible for the
attacker to reproduce the same patch as the defender.  \htedit{Specifically, the
defender first computes a user-side secret $\mathcal{S}$ based on private keys,
nonces, and website-specific identifiers}, and then uses it to
parameterize the optimization process of patch generation. The result is that
different secrets will generate significantly different patches. When
applied to the same network trace, the resulting patched traces will also
display {\em significantly disjoint} representations in both input and
feature spaces. Without access to the \shawn{user-side secret}, adversarial training using
patches generated by the WF attacker will have little effect, and \system{}
will continue to protect users from the WF attack.

\vspace{-0.05in}
\section{\system{}}
\label{sec:design}
In this section we present the design of \system{}, starting from an
overview, followed by detailed descriptions of its two key
components: patch generation and patch
injection.

\vspace{-0.05in}
\subsection{Overview}

Consider the task to protect a user $u$'s visits to website $W$.  \system{}
implements this protection by injecting an adversarial patch $p_{W,T}$ to
$u$'s live traffic when visiting $W$, such that when the
(defended) network
trace is analyzed by a WF attacker, its classifier will 
conclude that $u$ is visiting $T$ (a website different from $W$). Here
$T$ is a
configurable defense parameter.

\system{} includes two key steps: {\em patch generation} to compute an
adversarial traffic patch ($p_{W,T}$) and {\em patch injection} to inject a
pre-computed patch into $u$'s live traffic as $u$ is visiting $W$.  This is
also shown in Figure~\ref{fig:defense_overview}.

To generate a patch, \system{} inspects the WF feature space, and searches
for potential adversarial patches that can effectively ``move'' the feature
representation of a patched trace of $u$ visiting $W$ close to the feature
representation of the (unpatched) traces of $u$ visiting $T$.  When these two
feature representations are sufficiently similar, WF attacker classifiers will identify
the patched traces of $u$ visiting $W$ as a trace visiting $T$.

The above optimization can be formulated as follows: 
\begin{eqnarray}
   & p_{W,T}=\text{argmin}_{p} \E_{x\in X_W, x'\in X_T, s\in S}
     \mathbb{D}\left(\Phi(\Pi(p,x,s)), \Phi(x')\right)  \nonumber \\
   & \text{subject to} \;\; |p|\leq p_{budget}
   \label{eq:trafficpatch}
\end{eqnarray}
where $p_{budget}$ defines the maximum patch overhead, $X_W$ ($X_T$) defines
a collection of unpatched instances of $u$ visiting $W$ ($T$), $S$ defines
the set of feasible schedules to inject a patch into live traffic, and
$\Pi(p,x,s)$ defines the {\em patch injection} function that injects a patch
$p$ onto the live traffic $x$ under a schedule $s$.  Finally, $\Phi(\cdot)$
refers to the local WF feature extractor used by \system{} while $\mathbb{D}$
is a feature distance measure ($\ell_2$ norm in our implementation).
Figure~\ref{fig:apply} provides an abstract illustration of the patch and the
injection results.

\para{Randomized Design.} To prevent attackers from
extracting or reverse-engineering our patches, we design
\system{} to incorporate randomization into both patch
generation and injection. In \S\ref{subsec:optimization} and
\S\ref{subsec:patch_insert}, we describe how \system{} uses \shawn{user-side
secret} to configure $T$, $S$, $\Pi(p,x,s)$ and the optimization process of
Eq. (\ref{eq:trafficpatch}) to implement patches that are robust
against adaptive attacks. 

\para{Choosing $\Phi(\cdot)$.} As discussed earlier, \system{} does not
assume knowledge of the WF classifiers used by the attackers.  Instead,
\system{} operates directly on the {\em feature space} and uses a feature
extractor $\Phi$, basically a partial neural network trained on the same or
similar task. \system{} can train $\Phi$ locally or use a pre-trained WF
classifier from a trusted party (e.g., Tor).  Given an input $x$, \system{} uses the outputs of an
intermediate layer of $\Phi$ as $x$'s feature vector that quantifies distance
in the feature space.
A well-trained $\Phi$ can help \system{} tolerate web content
dynamics and can function well with a wider range of websites both known and
unknown~\cite{Rimmer2018}.


\begin{figure}[t]
  \centering
  \includegraphics[width=0.99\columnwidth]{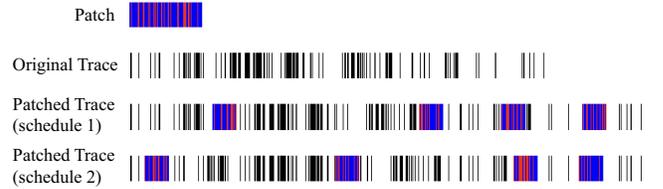}
  \caption{An abstract illustration of adversarial traffic patch and how it is
    injected into user traffic traces. Black/white bars mark the packet
    directions (out/in) of original packets, red/blue mark
    the (out/in) directions of dummy packets (i.e., the patch).  \system{}
    injects the patch into a live user trace following a specific 
    schedule. Here we show the results when the same patch is injected
    using two different schedules (1 \& 2) and random packet flipping
    (see \S\ref{subsec:patch_insert}).}
  \label{fig:apply}
\end{figure}

\subsection{Patch Generation}
\label{subsec:optimization}
The patch generation follows the optimization process defined by
eq. (\ref{eq:trafficpatch}). Here a novel contribution of \system{} is
to use a secret to configure the patch generation process, so
that the resulting patches are strictly conditioned on this secret. 
The means that when we apply patches generated with different secrets to the
same original trace, the resulting patched traces will be
significantly different in both input and WF feature spaces. The secret can
be a one way hash of a user's private key, a time-based nonce, and an
optional website specific modifier.  This secret allows \system{} to compute
multiple, distinct patches based on specific users, destination websites, and
time. A defender can periodically recompute patches with updated nonces to
prevent any longitudinal attacks that try to identify a common patch across
multiple connections to the same website.
This controlled randomization prevents attackers from observing the true
distribution of the patched traces over time or multiple traces.

\para{Parameterized Patch Generation.} When generating a patch, \system{}
uses a secret $\mathcal{S}$ to determine $T$ (the target website) and the exact
number of dummy packets to be injected, i.e., the length of the patch
$|p_{W,T}|$.

{\bf Choosing $T$: }  \system{} collects a large pool of candidate
target websites 
($400,000$ in our implementation) from the Internet. To protect user
$u$'s visit to $W$, \system{} uses $u$'s secret $\mathcal{S}$ to
``randomly'' select a website from the candidate pool, whose feature
representation is far from that of the original trace, i.e.,
$\mathbb{E}_{x\in X_W}\Phi(x)$ is largely dissimilar from
$\mathbb{E}_{x\in X_T}\Phi(x)$. 
For our implementation, we first calculate the feature distance
($\ell_2$ norm) between
$W$ and each candidate in our large pool,  identify the top $75^{\text{th}}$
percentile as a reduced candidate pool for $W$, and use the secret 
$\mathcal{S}$ as a random seed to select one from the reduced pool. 

{\bf Choosing $|p_{W,T}|$: } To further obfuscate the appearance of
our patches, we use the same secret $\mathcal{S}$ to select the patch length (i.e., the number of dummy packets to be injected into the user
traffic).  Specifically, given $p_{budget}$ (the maximum overhead
ratio), we ``randomly'' pick a patch length between
$(p_{budget}-\epsilon, p_{budget})$ where $\epsilon$ is a defense parameter
($0.1$ in our implementation). 

\para{Patch Optimization.}  When solving the patch optimization problem defined by
eq. (\ref{eq:trafficpatch}), 
we use Stochastic Gradient Descent (SGD)~\cite{shalev2014understanding} by batching samples 
from $X_W$. We note that in order to 
apply SGD, we need to relax the constraints on $p_{W,T}$ 
and allow it to lie in $[-1,1]^n$, \ie we allow 
for continuous values between $-1$ and $+1$ while 
optimizing. When the optimization ends, we quantize any negative value
as $-1$ and any non-negative value as $+1$. This method 
is widely used to solve discrete optimization problems in domains like natural 
language processing~\cite{wang2019natural,papernot2016crafting,
ren2019generating} and malware 
classification~\cite{kolosnjaji2018adversarial}, which avoids 
the intractability of combinatorial optimization~\cite{lee2004first}.
Finally, we note that the optimization also takes into account the
patch injection process (e.g., $\Pi(p,x,s)$), which we describe next in \S\ref{subsec:patch_insert}. 

\para{Managing Secrets.} Note that \system{} manages secrets and their
original components (private keys, nonces, website specific identifiers)
following the standard private key management
recommendations~\cite{barker2006recommendation}.  Secrets are only stored on
user's device and updated periodically.


\vspace{-0.05in}
\subsection{Patch Injection}
\label{subsec:patch_insert}

The patch injection component has two goals: 1) making the patch
input-agnostic and location-agnostic, in order to deploy our WF defense
on live traffic,  and 2) obfuscating the patched traces at run-time to prevent
attackers from detecting/removing patches from the observed
traces (to recover the original trace), or dismantling patches
via trace segmentation to reduce its coverage/effectiveness.

Existing solutions (used by adversarial patches for images) do not meet these
goals. They simply inject a given patch (as a fixed block) at any location of
the trace.  Under our problem context, an attacker can easily recognize the fixed pattern by observing
the user traces over a period of time.  Instead, we propose to combine a
segment-based patch injection method with run-time packet-level obfuscation.

\para{Segment-based Patch Injection.} A pre-computed patch $p$ is a sequence
of dummy packets (+1s and -1s) designed to be injected into the user's live
traffic $x$.  We first break $p$ into multiple segments of equal size $M_p$
(e.g., 30 packets), referred to as ``mini-patches.''  Each mini-patch is
assigned to protect a segment of $x$ of size $M_x$ (e.g., 100 packets).  We
configure the patch generation process to ensure that, within each segment of
$x$, the corresponding mini-patch stays as a fixed block. Patches are {\em
  location-agnostic}, so they will produce the same effect regardless of
location within the segment. Therefore, given $M_x$ and $M_p$, the injection
function $\Pi(p,x,s)$ in Eq. (\ref{eq:trafficpatch}) depends on $s$, an
injection schedule that defines the randomly chosen location of each
mini-patch within $x$'s segments (see Figure~\ref{fig:apply}). $S$
defines all possible sets of mini-patch locations.

An advantage of splitting up the patch into mini-patches is that it protects
against attackers trying to infer the website by searching for subsequences
of packets.  Our results confirm this hypothesis later in
\S\ref{subsec:detect}.

\para{Run-time Patch Obfuscation using Packet Flips.} When a single patch
$p$ is used to protect $u$'s visit to $W$ over some window of time, the same
$p$ could appear in multiple patched traces. While our segment-based
injection hides $p$ within the patched traces, it does not change $p$. Thus a
resourceful attacker could potentially recover $p$ using advanced trace
analysis techniques. To further elevate the obfuscation, we apply random
``packet flipping'' to make $p$ different in each visit to $W$.
Specifically, in each visit session, we randomly choose a small set of dummy
packets (in $p$) and flip their directions (out to in, in to out).  We ensure
that this random flipping operation is accounted by the patch generation
process (eq. (\ref{eq:trafficpatch})), so that it does not affect the patch
effectiveness.  Later in \S\ref{sec:counter}, we show that this random
flipping does deter countermeasures that leverage frequency analysis to
estimate $p$.

\para{Deployment Considerations.} At run-time, the user and Tor bridge follow
a simple protocol to protect traces. As soon as the user $u$ requests a
website $W$, \system{} sends the pre-computed patch $p_{W,T}$ (after random
flipping) and the current insertion schedule $s$ to the Tor bridge through an
encrypted and fixed length (padded) network tunnel. Next, the user and the
Tor bridge coordinate to send dummy packets to each other to achieve the
protection.

\section{Evaluation}
\label{sec:eval}
\vspace{-0.06in}
In this section, we perform a systematic evaluation of \system{} under
a variety of WF attack scenarios.  Specifically, we evaluate \system{} against i)  the state-of-the-art DNN WF attacks
whose classifiers use either the same or 
different feature
extractors of \system{} (\S\ref{sec:eval_dnn}), ii)  non-DNN WF attacks that use handcrafted
features (\S\ref{sec:eval_nondnn}).   Furthermore, we compare \system{} against existing WF defenses
 under these 
  attacks (\S\ref{sec:eval_prev}) and in terms of {\em information
    leakage},  which estimates the number of potential
  vulnerabilities facing any WF defense (including those not yet exploited by existing WF
  attacks) (\S\ref{sec:eval_leakage}).

Overall, our results show that \system{} is highly effective against
state-of-the-art WF attacks ($\geq 94.4\%$ attack protection at a 30\%
bandwidth overhead). It largely outperforms existing defenses in all
three key metrics: protection success rate, bandwidth overhead, and
information leakage.

Finally, under a simplified 2-class setting, we show that \system{} 
is {\em provably robust} with a sufficiently high bandwidth
overhead.  Our theory result and proof, which rely on the theory of optimal transport,
are listed in the Appendix.




\vspace{-0.06in}
\subsection{Experimental Setup}




\vspace{-0.05in}
\para{WF Datasets.} Our experiments use two well-known WF datasets:
\datasetsmall{} and \datasetlarge{} (see Table
\ref{tab:dataset}), which are commonly used by prior works for
evaluating  WF
attacks and defenses~\cite{bhat2019var,
holland2020regulator,sirinam2018deep,Rimmer2018,nasr2020blind}.  Both
datasets contain Tor users' website traces (as data) and their corresponding
websites (as labels). \datasetsmall{} was collected by Sirinam \etal around
  February 2016 and includes $86,000$ traces belonging to $95$ 
websites \cite{sirinam2018deep} in the Alexa top website
list~\cite{alexa}. \datasetlarge{}  was 
  collected by Rimmer \etal~\cite{Rimmer2018} in January 2017,
  covering $2$ million traces for visiting Alexa's top $900$
  websites. The two datasets have partial overlap in their
  labels. Following prior works on WF attacks and defenses, we pad each trace in the datasets into a fixed 
  length of $5000$.

  \begin{table}[h]
  \centering \vspace{-0.06in}
  \resizebox{0.48\textwidth}{!}{
    \begin{tabular}{|c|c|c|c|}
    \hline
    \textbf{Dataset Name} & \textbf{\# of Labels} & \textbf{\# of Training Traces} & \textbf{\# of Testing Traces} \\ \hline
    \datasetsmall & 95 & 76K & 10K \\ \hline
    \datasetlarge & 900 & 2M & 257K \\ \hline
    \end{tabular}
  }\vspace{-0.05in}
  \caption{Two WF datasets used by our experiments.}\vspace{-0.05in}
  \label{tab:dataset}
\end{table}




  \para{\system{} Configuration.}  Next we describe how we configure
  \system' feature extractor $\Phi$,  the patch injection parameters, and
  the user-side secret.  We build four feature extractors using two well-known 
  DNN architectures for web trace analysis, Deep CNN \shawn{(from the DF attack)} and
  ResNet-18 \shawn{(from the Var-CNN attack)}, and train them on the above two WF datasets.  We
  also apply standard adversarial patch
  training~\cite{bagdasaryan2020blind,nasr2020blind,rao2020adversarial}
  to fine-tune these feature
  extractors for 20 epochs, which helps increase the
  transferability of our adversarial 
  patches\footnote{We show the protection results of \system{} when
    using standard feature extractors without adversarial patch
    training in table~\ref{tab:transfer_nonrobust} in the Appendix. }. Table~\ref{tab:defense_config} lists the
  resulting $\Phi$s and we name them based on the model architecture
  and training dataset.

  \begin{table}[h]
  \centering
  \resizebox{0.48\textwidth}{!}{
    \begin{tabular}{|c|c|c|c|c|}
    \hline
    \textbf{Model Architecture $\mathbb{F}$} & \textbf{Training Data $\mathcal{X}$} & 
    \textbf{Feature Extractor $\Phi$}  \\ \hline
        Deep CNN (DF) & \datasetsmall  & \dfsmall  \\ \hline
        ResNet-18 (Var-CNN) & \datasetsmall  & \varsmall \\ \hline
        Deep CNN (DF) & \datasetlarge  & \dflarge \\ \hline
        ResNet-18 (Var-CNN) & \datasetlarge  & \varlarge \\ \hline
        \end{tabular}
      }\vspace{-0.05in}
  \caption{The four feature extractors ($\Phi$) used in our
    experiments, their model architecture and
    training dataset.}
  \label{tab:defense_config}
\end{table}

When injecting patches, we set the mini-patch length $M_p$ to $10$ and
the trace segment length $M_x$ to $M_p/R$, where $R$ represents the
bandwidth overhead of the defense ($0<R<1$).  We experiment with
different values of $M_p$ and find that its impact on the defense
performance is insignificant. Thus we empirically choose a value of
$10$.  By default, we set the packet flipping rate $\beta=0.2$, i.e.,
at run-time 
20\% of the dummy packets in each mini-patch will be flipped to a
different direction. The impact of $\beta$ on 
possible countermeasures is discussed later in \S\ref{subsec:detect}. 
\shawn{We use a separate dataset that contains traces from 400,000 different 
websites as our pool of target websites~\cite{Rimmer2018}. }

Finally, since our patch generation depends on the user-side secret,
we repeat each experiment $10$ times, each using a randomly
formulated user-side secret (per website). We report the average and
standard deviation values. Overall, the standard deviations are consistently low across our
experiments, i.e., $<1\%$ for protection success rate. 


\para{Attack Configuration. } We consider two types of WF attacks: 1)
non-DNN based and using handcrafted features, i.e.,  \textit{k-NN}~\cite{wang2014effective}, \textit{k-FP}~\cite{hayes2016k}, and 
\textit{CUMUL}~\cite{panchenko2016website}, and 2) DNN-based attacks, i.e., \textit{DF}~\cite{sirinam2018deep} and 
\textit{Var-CNN}~\cite{bhat2019var}.  
We follow the original implementations to implement
these attacks. 

Consistent with prior WF defenses~\cite{wang2017walkie,imani2019mockingbird,
gong2020zero,de2020trafficsliver,holland2020regulator},  we assume
that the attacker trains their classifiers using {\em defended traces}, i.e., those
patched using our defense.  To generate and label such traces,  the attacker downloads
\system{}  and runs it on their own traces when visiting a variety of
websites.  Here the attacker must input some user-side secret to run
\system{}, which we refer to as $\mathcal{S}_{attack}$. Since the
attacker has no knowledge of the user-side secret
$\mathcal{S}_{defense}$ being used by \system{} to
protect the current user $u$, we have $\mathcal{S}_{defense} \ne \mathcal{S}_{attack}$.  A relevant countermeasure by attackers is to enumerate many 
secrets when training the attack classifier, which we discuss later
in \S\ref{subsec: disrupt}. 

\htedit{
\para{Intersection Attacks.} We also consider {\em intersection attacks} in our evaluation of
\system{}.  Here the attacker assumes the victim visits the same
websites regularly and monitors the victim’s traffic for a
longer time period (e.g., days). With these information, the
attacker could make better inferences on user’s traces. We test
\system{} against the intersection attack used by a previous WF
defense~\cite{imani2019mockingbird} and find that the attack is
ineffective on \system{}.  More details about our experiments and
results can be found in the Appendix. }





\para{Evaluation Metrics. }  We test \system{} against various WF
attacks using the testing traces in the two WF datasets (see
Table~\ref{tab:dataset}).  We evaluate \system{} using 
three metrics: 1) \textit{protection success 
rate} defined as the WF attack's misclassification 
rate on the defended traces,  2) \textit{bandwidth overhead}  $R=\frac{\text{
patch length}}{\text{original trace length}} $, and 3) \textit{information 
leakage}, which  measures the amount of potential vulnerability of any
WF defense~\cite{cherubin2017bayes,li2018measuring}. 

We also examine the computation cost of \system{}. The average
time required to compute a patch 
is $19s$ on an Nvidia Titan X GPU and $43s$ on an eight core 
i9 CPU machine.

\begin{figure*}[t]
  \centering
    \begin{minipage}{0.32\textwidth}
    \centering
    \includegraphics[width=\textwidth]{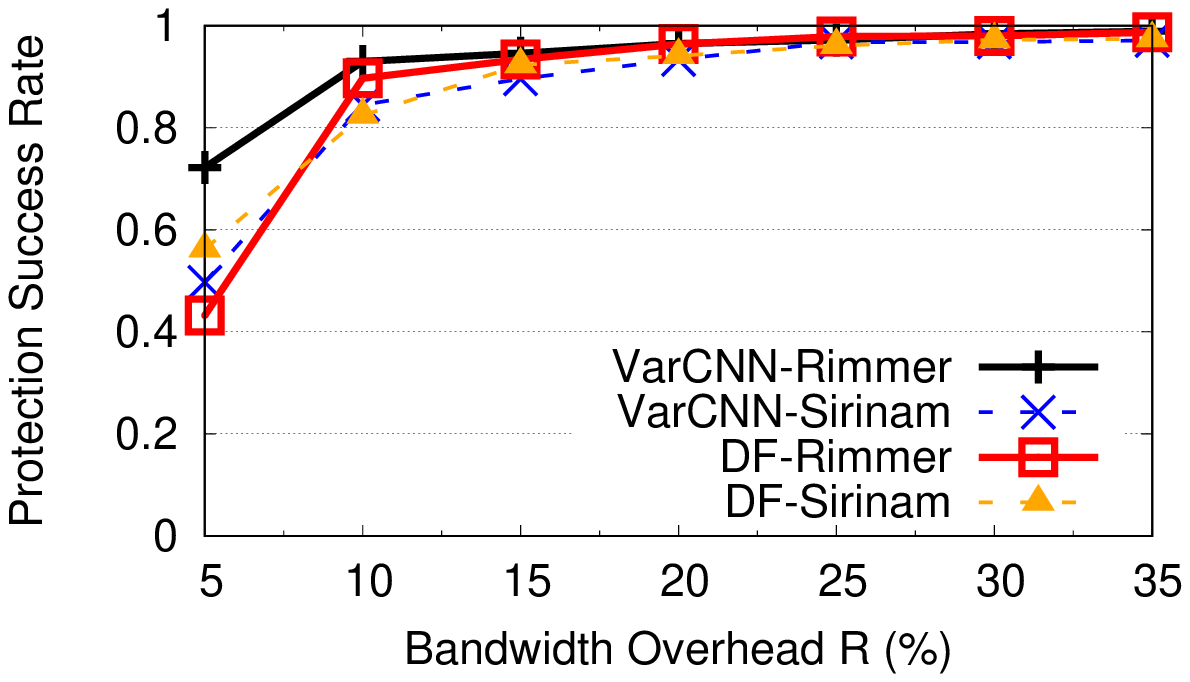}
    \caption{When against DNN-based attacks, \system{}'s
      protection success rate increases
      rapidly with its bandwidth overhead $R$.  \system{} achieves $>97\%$
      protection rate when $R$ reaches 30\%.  Assuming
      matching attack/defense. 
    }
  \label{fig:overhead}
  \end{minipage}
  \hfill
  \begin{minipage}{0.32\textwidth}
  \centering
  \includegraphics[width=\textwidth]{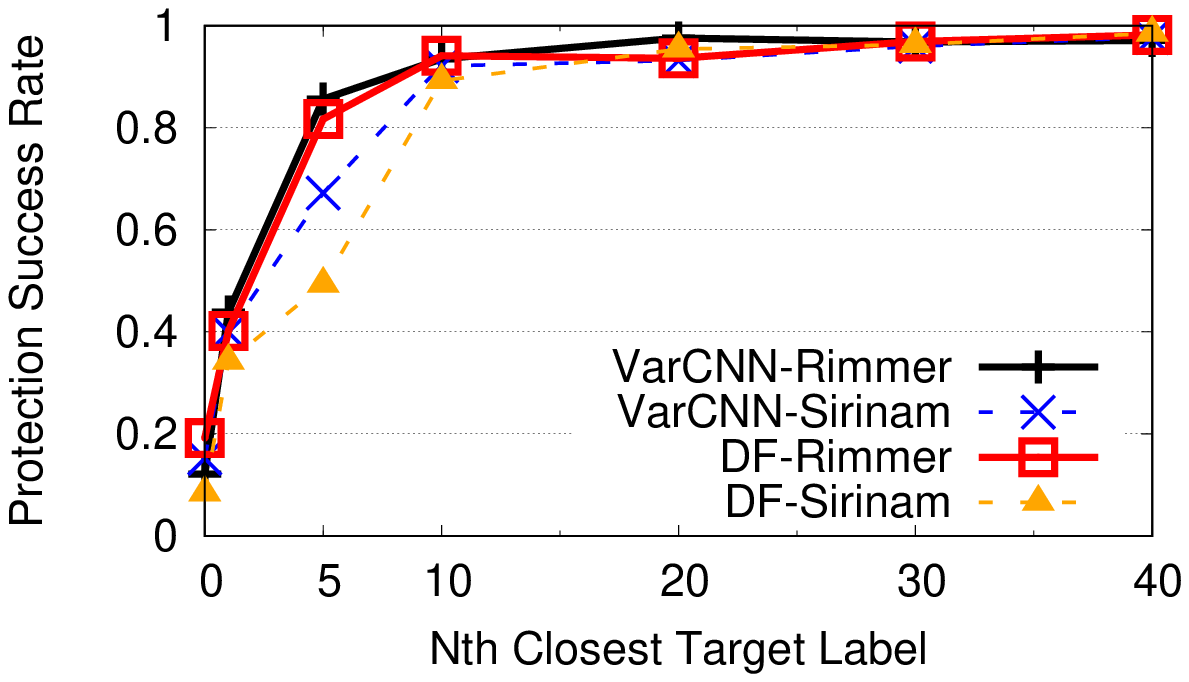}
  \caption{Worst-case
  analysis on the impact of key collision on \system,
  where the target feature representations $T_{attack}$ and $T_{defense}$ are
  $N^{th}$ nearest neighbors in the feature space. Assuming
      matching attack/defense. 
  }
  \label{fig:key_effect}
  \end{minipage}
  \hfill
  \begin{minipage}{0.32\textwidth}
  \centering
  \includegraphics[width=\textwidth]{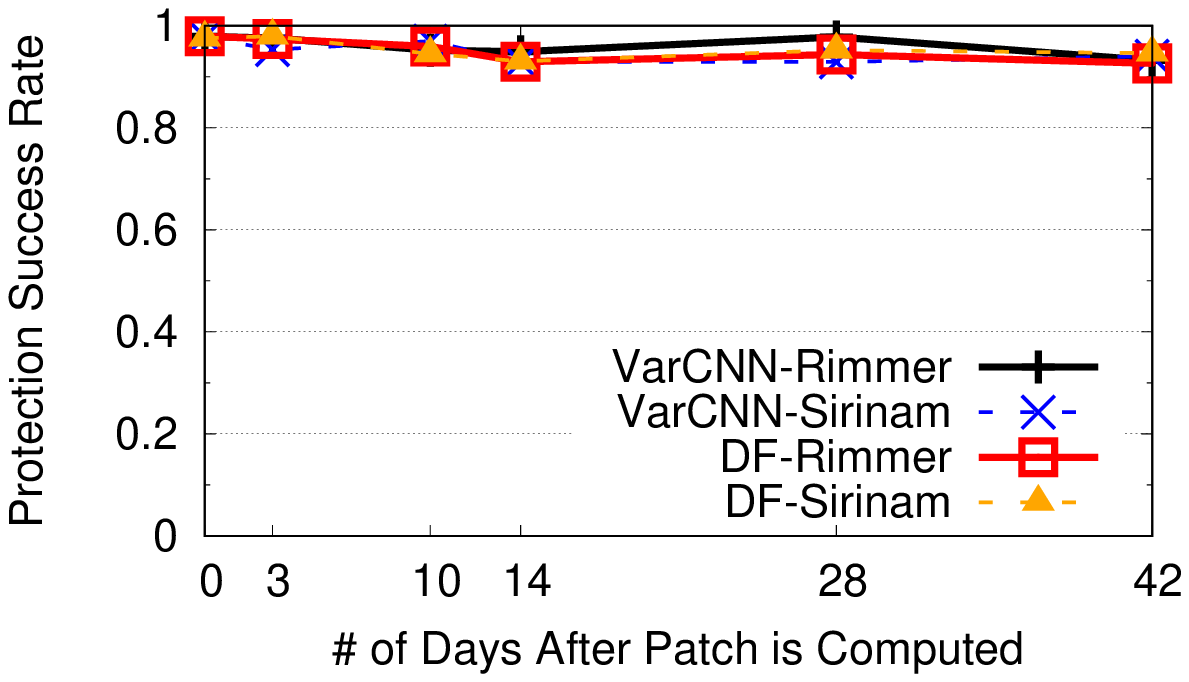}
  \caption{\system's effectiveness over time against fresh attacks after deploying a patch to protect $u$ visiting
    $W$ at day 0.  The same patch is able to resist attacks that train
    their classifiers on newly generated defended traces in
    subsequent days.} 
  \label{fig:drift}
  \end{minipage}
\end{figure*}

\begin{table*}[t]
  \centering
  \resizebox{0.79\textwidth}{!}{
\begin{tabular}{cc|ccc|cc}
\hline
\multirow{2}{*}{\textbf{Dataset}} &
                                    \multirow{2}{*}{\textbf{\begin{tabular}[c]{@{}c@{}}\system{}'s\\
                                                              Feature
                                                              Extractor\end{tabular}}}
                                  &
                                    \multicolumn{5}{c}{\textbf{\system's
                                    Protection
                                    Success Rate (\%) against WF
                                    Attacks 
                                    }} \\ \cline{3-7} 
 &  & \textbf{k-NN} & \textbf{k-FP} & \textbf{CUMUL} & \textbf{DF} & \textbf{Var-CNN} \\ \hline
\multirow{4}{*}{\begin{tabular}[c]{@{}c@{}}{\tt Sirinam}\end{tabular}} & \dfsmall & $98.2\pm0.2$ &
                                                                     $98.7\pm0.5$ & $97.3\pm0.4$ & $\mathbf{97.2\pm0.7}$ & $96.4\pm0.9$ \\
 & \dflarge & $97.4\pm0.5$ & $98.2\pm0.6$ & $96.3\pm0.6$ & $95.1\pm0.8$ & $97.2\pm0.5$ \\
 & \varsmall & $97.7\pm0.4$ & $98.0\pm0.6$ & $95.8\pm0.8$ & $94.4\pm0.6$ & $\mathbf{96.8\pm0.4}$ \\
 & \varlarge & $97.5\pm0.6$ & $98.9\pm0.3$ & $97.4\pm0.2$ & $95.9\pm0.8$ & $95.4\pm0.4$ \\ \hline
\multirow{4}{*}{\begin{tabular}[c]{@{}c@{}}{\tt Rimmer} \end{tabular}} & \dfsmall & $98.8\pm0.3$ & $97.6\pm0.5$ & $96.5\pm0.3$ & $96.3\pm0.5$ & $97.3\pm0.6$ \\
 & \dflarge & $97.0\pm0.5$ & $97.6\pm0.6$ & $97.8\pm0.6$ & $\mathbf{98.0\pm0.4}$ & $97.9\pm0.6$ \\
 & \varsmall & $98.2\pm0.4$ & $98.9\pm0.2$ & $97.3\pm0.5$ & $95.9\pm0.6$ & $97.3\pm0.4$ \\
 & \varlarge & $98.6\pm0.3$ & $98.1\pm0.4$ & $98.4\pm0.3$ & $96.8\pm0.5$ & $\mathbf{98.4\pm0.5}$ \\ \hline
\end{tabular}
  }\vspace{-0.0in}
  \caption{\system{}'s protection success rate against different WF
    attacks. The {\bf bold} entries are results under matching
    attack/defense. 
  }
  \label{tab:transfer}\vspace{-0.06in}
\end{table*}

\vspace{-0.06in}
\subsection{\system{} vs. DNN-based WF Attacks}
\label{sec:eval_dnn}
Our experiments consider three attack scenarios.
\begin{packed_itemize} \vspace{-0.06in}
  \item {\em Matching attack/defense}:  the attacker and \system{}
    operate on the same original trace data $\mathcal{X}$ 
(e.g., \datasetsmall) and use the same model architecture $\mathbb{F}$ (e.g.,
DF). \system{} trains its feature extractor $\Phi$ using model $\mathbb{F}$ and data $\mathcal{X}$,  while the attacker trains its attack
classifier using model $\mathbb{F}$ and  the {\em defended} version of
$\mathcal{X}$. 
\item {\em Mismatching attack/defense}: the
attacker and \system{} use different $\mathcal{X}$ and/or $\mathbb{F}$
when building their attack classifier and feature extractor,
respectively.

\item {\em Defense effectiveness over time}: \system{} starts to apply
  a patch
  $p$ to protect $u$'s visits to $W$ since 
  day 0; the attacker runs {\em fresh} WF attacks in the subsequent days by
  training attack classifiers on 
 defended traces freshly  generated in each day. 
\vspace{-0.06in}
\end{packed_itemize}

\para{Scenario 1: Matching Attack/Defense.}
Figure~\ref{fig:overhead} plots \system's protection success rate
against its bandwidth 
overhead, for each of the four ($\mathbb{F}, \mathcal{X}$)
combinations listed in Table~\ref{tab:defense_config}. The standard
deviation values are small ($<1\%$) and thus omitted from the figure
for clarity.

We see that
\system{} consistently achieves $97\%$ or higher protection rate when the
defense 
overhead $R \geq 30\%$. This result confirms that the adversarial
patches, when controlled via user-side secrets,  can effectively break
WF attacks trained on defended traces. Note that \system{} is even more
  effective against WF  attacks whose classifiers are 
  trained on original (undefended) traces, i.e., 98\% protection success rate with a 15\% overhead. 
For the rest of the paper, we use 
$R=30\%$ as the default configuration.

\vspace{1pt} {\bf Likelihood of Secret Collisions:}   In the above
experiments, we randomly select the secret pair ($\mathcal{S}_{attack},
\mathcal{S}_{defense}$) and show that in general, as long as $\mathcal{S}_{attack}\neq
\mathcal{S}_{defense}$, \system{} is highly effective against WF
attacks.  Next, we also run a {\em worst-case} analysis
that looks at cases where the combination of $\mathcal{S}_{attack}$ and $\mathcal{S}_{defense}$ leads to
heavy collision in the WF feature space. That is, the patches
generated by $\mathcal{S}_{attack}$ and $\mathcal{S}_{defense}$ will move the feature
representation of the original trace to the target feature
representations of website $T_{attack}$ and $T_{defense}$,
respectively, but  the two targets are close in 
the feature space (with respect to the $\ell_2$ distance).  Here we
ask the question: {\em how
``close'' do $T_{attack}$ and $T_{defense}$ need to be in order to break our defense}? 

We answer this question in Figure~\ref{fig:key_effect} by plotting
\system's protection success rate when $T_{attack}$ is the $N^{th}$ nearest
label to $T_{defense}$ in the feature space ($N\in [1,40]$).  Here we show the
result for each of the four ($\mathbb{F}, \mathcal{X}$)
combinations.  We see that as long as $T_{attack}$ is beyond the top
$20^{th}$ nearest neighbors of $T_{defense}$,  \system{} can maintain $>96\%$
protection success rate.   Since our pool of target websites is
very large (400,000 websites), the probability of an attacker finding a
$K_{attack}$ that weakens our defense is very low
($p_{\text{bad}}\approx\frac{20}{400,000}=5 \times 10^{-5}$).  Later
in 
\S\ref{sec:counter}, we show that even when the attacker trains their classifiers on defended traces produced
by a large number of secrets, the impact on \system{} is still minimum.

\begin{table*}[t]
  \centering
  \resizebox{0.8\textwidth}{!}{
  \begin{tabular}{ccc|cccccccc}
  \hline
  \multirow{2}{*}{\textbf{Dataset}} & \multirow{2}{*}{\textbf{Defense Name}} & \multirow{2}{*}{\textbf{\begin{tabular}[c]{@{}c@{}}Bandwidth \\ Overhead\end{tabular}}} & \multicolumn{6}{c}{\textbf{Protection Success Rate Against WF Attacks}} \\ \cline{4-9} 
   &  &  & \textbf{k-NN} & \textbf{k-FP} & \textbf{CUMUL} &
                                                            \textbf{DF}
                         & \textbf{Var-CNN} & \textbf{Worst Case
                                              } \\ \hline
  \multirow{4}{*}{\textbf{\datasetsmall}} & WTF-PAD & 54\% & 87\% & 56\% & 69\% & 10\% & 11\%  & 10\% \\
   & FRONT & 80\% & 96\% & 68\% & 72\% & 31\% & 34\%  & 31\% \\
   & Mockingbird & 52\% & 94\% & 89\% & 91\% & 69\% & 73\% &  69\% \\
   & \textbf{\system{}} & \textbf{30\%} & \textbf{98\%} & \textbf{99\%} & \textbf{97\%} & \textbf{96\%} & \textbf{95\%} & \textbf{95\%} \\ \hline
  \multirow{6}{*}{\textbf{\datasetlarge}} & WTF-PAD & 61\% & 84\% & 58\% & 72\% & 14\% & 11\% & 11\% \\
   & FRONT & 72\% & 97\% & 62\% & 68\% & 37\% & 39\% &  37\% \\
   & Mockingbird & 57\% & 96\% & 87\% & 90\% & 71\% & 79\% &  71\% \\
  & Blind Adversary & 11\% & - & - & - & - & $76\%^*$ &  76\% \\
   & \textbf{\system{}} & \textbf{10\%} & \textbf{95\%} & \textbf{92\%} & \textbf{93\%} & \textbf{87\%} & \textbf{92\%}  & \textbf{87\%} \\
   & \textbf{\system{}} & \textbf{30\%} & \textbf{99\%} & \textbf{98\%} & \textbf{98\%} & \textbf{97\%} & \textbf{98\%}  & \textbf{97\%} \\ \hline
  \end{tabular}
  }\vspace{-0.06in}
  \caption{Comparing bandwidth overhead and protection success rate of WTF-PAD, FRONT, 
  Mockingbird, Blind Adversary, and \system{}. $^*$ we take the number from their original paper~\cite{nasr2020blind} as the authors have not released their source code. }\vspace{-0.06in}
  \label{tab:compare}
\end{table*}

\para{Scenario 2: Mismatching Attack/Defense.} We now consider the
more general scenario where the attacker and \system{} use different
$\mathcal{X}$ and/or $\mathbb{F}$ to train their classifiers and
feature extractors, respectively.  Here we consider two existing DNN-based WF attacks, DF and
Var-CNN, trained on \datasetsmall{} or \datasetlarge{},  and configure
\system{} to use one of the four feature extractors listed in
Table~\ref{tab:defense_config}.   Using the test data of
\datasetsmall{} and \datasetlarge{}, we evaluate \system{} against
these WF attacks (DF and Var-CNN), and list \system's protection success rate in 
Table~\ref{tab:transfer}.   As reference, we also include the results
of matching attack/defense (scenario 1), marked in bold.

Overall, \system{} remains highly effective ($>94\%$ protection rate)
against attacks using different training data and/or model
architecture.   This shows that our proposed adversarial patches generated from a local feature extractor
can successfully {\em transfer} to a variety of attack classifiers.

\para{Scenario 3: Defense Effectiveness Over Time.} Next, we
evaluate \system{} against freshly generated attacks over
time.  Here \system{} computes and deploys a patch $p$ to protect
$u$'s visits to $W$ at day 0; the attacker continues to run WF attacks in the
subsequent days, and each day, they train the attack classifier using
defended traces generated on the current day.  We use this experiment
to examine the robustness of \system's patches under web content
dynamics and network dynamics, also referred to as concept drift.


Our experiment uses the concept drift dataset provided by Rimmer
\etal~\cite{Rimmer2018}, which was collected along with \datasetlarge. 
This new dataset consists of $200$ websites (a subset of 900 websites
in \datasetlarge), repeatedly collected over a six week period (0 day, 3 days, 10 days, 
14 days, 28 days, 42 days).  We run \system{} using each of the four
feature extractors to produce a patch at day 0.  The attacker uses the
Var-CNN classifier and trains the classifier using fresh defended traces
generated on  day 3,
10, 14, 28 and 42.  We see that the protection success rate remains consistent
($>93\%$) over
42 days.


\secspace
\subsection{\system{} vs. non-DNN Attacks}
\label{sec:eval_nondnn}

While \system{} targets the inherent vulnerability of
DNN-based WF attacks, we show that the adversarial patches produced by
\system{} are also highly effective against non-DNN based WF attacks.
Here we consider three most-effective non-DNN attacks:
k-NN~\cite{wang2014effective}, k-FP~\cite{hayes2016k}, and CUMUL~\cite{panchenko2016website}. Table~\ref{tab:defense_config} lists the protection
success rate of \system{} under these attacks,  using each of the four 
different feature extractors.  Our results align with existing
results: adversarial patches designed
for DNN models also transfer to non-DNN models~\cite{demontis2019adversarial,charles2019geometric}.


\secspace
\subsection{Comparison with Previous WF Defenses}
\label{sec:eval_prev}

Table~\ref{tab:compare} lists the performance of \system{} and four 
state-of-the-art defenses (WTF-PAD~\cite{juarez2016toward}, 
FRONT~\cite{gong2020zero}, Mockingbird~\cite{imani2019mockingbird}, Blind Adversary~\cite{nasr2020blind} as described in \S\ref{sec:back}). 
We evaluate them against five attacks on the two WF datasets. For \system{}, we use \varlarge{} as
the local feature extractor. 

\para{WTF-PAD.} WTF-PAD is reasonably effective against
traditional ML attacks, but performs poorly against any DNN
based attack, \ie protection success rate drops to 10\%. 
These findings align with existing 
observations~\cite{juarez2016toward,sirinam2018deep}. 

\para{FRONT.} FRONT is effective against non-DNN attacks but fails against DNN attacks. 
In our experiments, FRONT induces
larger bandwidth overhead than reported in the 
original paper~\cite{gong2020zero}. The discrepancy is 
because FRONT's overhead is dataset dependent and
a separate paper~\cite{holland2020regulator} reports the same overhead as ours when 
applying FRONT to \datasetsmall. 

\para{Mockingbird.} Mockingbird is effective
against non-DNN attacks and reasonably
effective against DNN attacks (71\% protection success). 
As stated earlier, the biggest drawback of Mockingbird is that
the defense needs the access to the full 
trace beforehand, making it unrealistic
to implement in the real-world. 

\para{Blind Adversary.} We are not able to obtain
the source code of Blind adversary at the time of 
writing. In the original paper, Blind Adversary 
is evaluated on the same dataset (\datasetlarge) 
using same robust training 
technique. However, the authors report the results for 
11\% overhead under the \emph{white-box} setting. 
Using the same setting and similar overhead, \system{} 
achieves $92\%$ protection success rate
whereas Blind Adversary achieves $76\%$ protection.

\begin{figure*}[t]
  \centering
  \hfill
  \begin{minipage}{0.32\textwidth}
    \centering
    \includegraphics[width=1\columnwidth]{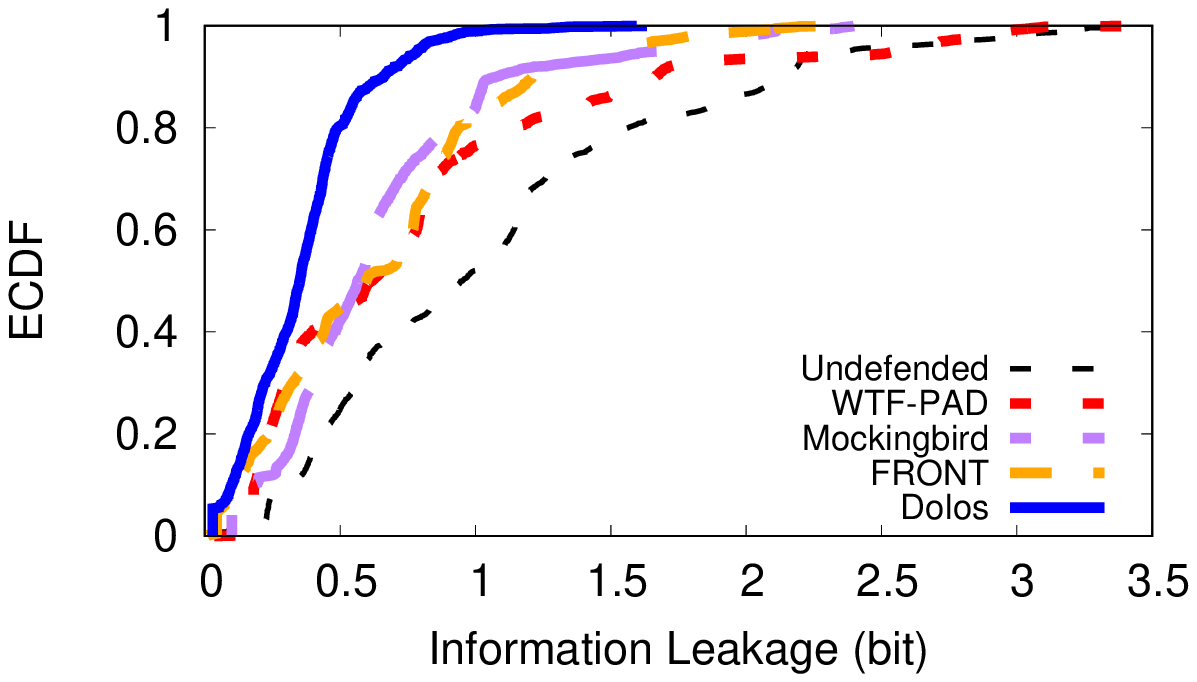}
  \caption{The ECDF of information leakage on hand crafted 
  features from WeFDE. }
  \label{fig:infoleakage}
  \end{minipage}
  \hfill
  \begin{minipage}{0.32\textwidth}
    \centering
    \includegraphics[width=1\columnwidth]{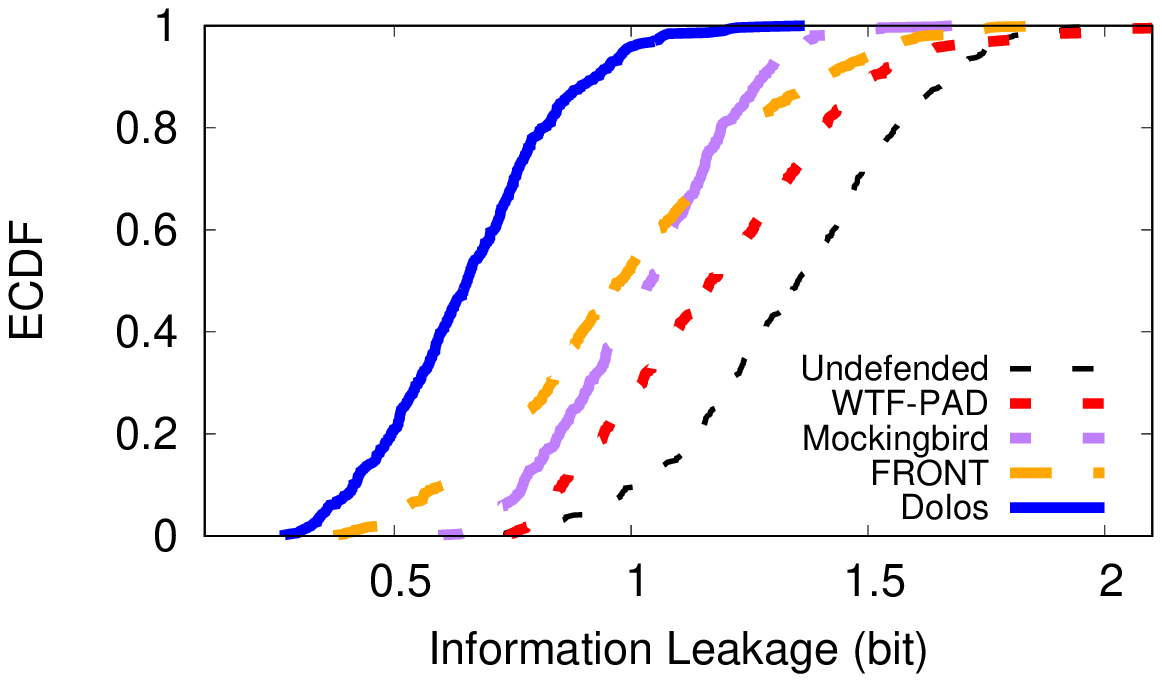}
  \caption{The ECDF of information leakage on features from models trained on defended traces. }
  \label{fig:infoleakage_dnn}
  \end{minipage}
  \hfill
  \begin{minipage}{0.32\textwidth}
    \centering
    \includegraphics[width=1\textwidth]{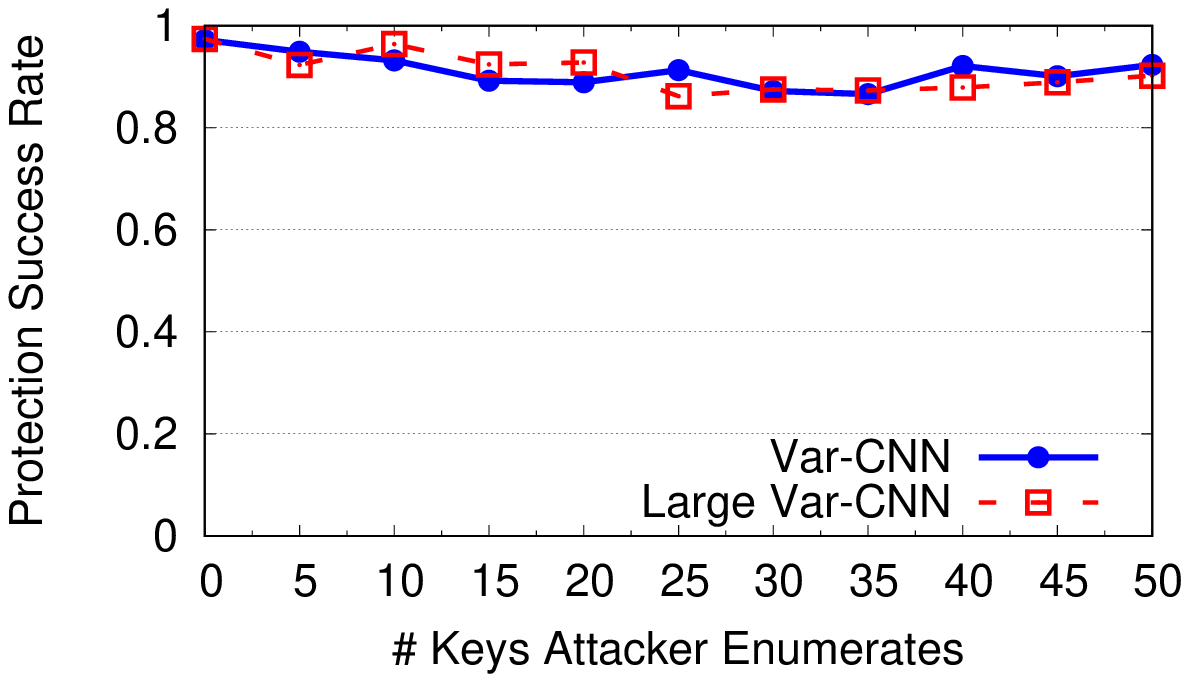}
    \caption{Protection performance drops slightly as attacker trains on traces defended by increasing numbers of secrets. }
    \label{fig:protected_train}
  \end{minipage}

\end{figure*}

\secspace
\subsection{Information Leakage Analysis}
\label{sec:eval_leakage}
Recent works~\cite{cherubin2017bayes,li2018measuring}
argue that WF defenses need to be evaluated beyond protection success
rate because existing WF attacks may not
show the hidden vulnerability of 
the proposed defense. Li \etal~\cite{li2018measuring}
proposes an information 
leakage estimation framework (WeFDE)
that measures a given defense's information 
leakage on a set of features. We measure \system{}
information leakage following the WeFDE framework. 

WeFDE computes the mutual information of trace 
features between different classes. 
A feature has a high information leakage 
if it distinguishes
traces between different websites. 
However, we cannot directly apply such analyzer to measure
the leakage of \system{} because \system{} does not
seek to make traces from different websites
indistinguishable from each other. Attacker can separate the traces defended by \system{} but has no
way of knowing which websites the traces belong to. 

Thus to measure the information
leakage of \system, we need to obtain the overall distribution
of defended traces agnostic to secrets, \ie defended
traces using all possible secrets. We approximate this distribution
using traces generated with a large number of secrets
and feed the aggregated traces to WeFDE for information 
leakage analysis. 

We use all websites from \datasetlarge{}. For each website, we 
use \dfsmall{} and $80$ different secrets to generate defended traces. 
We find that enumerating more than $80$ secrets has limited 
impact on the information leakage results. 
We aggregate the traces together before feeding into WeFDE. 
We compare \system{} with three 
previous state-of-the-art defenses, WTF-PAD, FRONT, and MockingBird. 
We measure the leakage (bits) on two sets of features: 1) hand crafted 
features from the WeDFE paper~\cite{li2018measuring}, and 2) DNN features from 
a model trained on the defended traces of a given defense. 

\para{Leakage on Hand Crafted Features.} We first 
measure the information leakage on the default
set of $3043$ hand crafted features 
from WeFDE. This feature set 
covers various categories, \eg packet ngrams, 
counts, and bursts. Figure~\ref{fig:infoleakage}
shows the empirical cumulative distribution 
function (ECDF) of information 
leakage across the features. The curve of 
\system{} increases much faster than the 
curves of other defenses. 
For \system, no feature leaks more than $1.7$ bits of 
information, which is lower 
than the minimum leakage of other defenses. 

\para{Leakage on Features Trained on Defended 
Traces.} For each defense, we measure
the information leakage on the feature space
of Var-CNN models trained on the defended traces. For \system, the
defended traces used by the attacker are generated by a different
secret. 
Figure~\ref{fig:infoleakage_dnn}
shows the ECDF of the leakage across all the features. 
Overall, this feature set leaks more information
than the hand crafted features in Figure~\ref{fig:infoleakage}. 
Again the curve of \system{} increases faster than 
the curves of other defenses. The gap between \system{}
and other defenses is larger in this set of features
showing \system{} is more resilient against models
trained on defended traces due to randomness introduced 
by the secret.

\vspace{-0.1in}
\section{Countermeasures}
\label{sec:counter}

\begin{table*}[t]
  \centering
  \begin{minipage}{0.32\textwidth}
    \centering
    \resizebox{1\textwidth}{!}{
    \begin{tabular}{|c|c|c|}
    \hline
    \multirow{2}{*}{\textbf{\begin{tabular}[c]{@{}c@{}}Ratio of\\ Packets\end{tabular}}} & \multicolumn{2}{c|}{\textbf{Protection Success Rate When}} \\ \cline{2-3} 
         & \textbf{Drop Packets} & \textbf{Flip Packets} \\ \hline
    0.05 & 97\%                  & 97\%                  \\ \hline
    0.1  & 96\%                  & 97\%                  \\ \hline
    0.2  & 95\%                  & 96\%                  \\ \hline
    0.3  & 96\%                  & 96\%                  \\ \hline
    0.5  & 96\%                  & 96\%                  \\ \hline
    \end{tabular}
    }
      \caption{Protection success rate remains high as attacker drops or flips a portion of packets before classification. }
    \label{tab:dropflip}  
    \end{minipage}
    \hfill
  \begin{minipage}{0.32\textwidth}
    \centering
    \resizebox{1\textwidth}{!}{
      \begin{tabular}{|c|c|}
      \hline
    \multirow{2}{*}{\textbf{\begin{tabular}[c]{@{}c@{}}Ratio of Packets \\ Trimmed\end{tabular}}} &
      \multirow{2}{*}{\textbf{\begin{tabular}[c]{@{}c@{}}Protection Success \\ Rate\end{tabular}}} \\          &      \\ \hline
      0.1 & 97\% \\ \hline
      0.2 & 97\% \\ \hline
      0.3 & 96\% \\ \hline
      0.5 & 96\% \\ \hline
      \end{tabular}
    }
    \caption{Protection success rate remains high as attacker trim a portion of packets at the end of each trace before classification. }
  \label{tab:trimpackets}   

  \end{minipage}
    \hfill
    \begin{minipage}{0.32\textwidth}
      \centering
    \resizebox{1\textwidth}{!}{
      \begin{tabular}{|c|c|}
      \hline
      \multirow{2}{*}{\textbf{\begin{tabular}[c]{@{}c@{}}\# of Epochs Trained\\ with Adversarial Training\end{tabular}}} & \multirow{2}{*}{\textbf{\begin{tabular}[c]{@{}c@{}}Protection Success \\ Rate\end{tabular}}} \\
       &  \\ \hline
      20 & 96\% \\ \hline
      40 & 95\% \\ \hline
      100 & 94\% \\ \hline
      200 & 95\% \\ \hline
      \end{tabular}
    }
    \caption{Protection success rate remains high when 
    transferring to increasing more robust models.  }
  \label{tab:robust}   
  \end{minipage}
\end{table*}

In this section, we explore additional countermeasures that could be launched
by attackers with complete knowledge of \system{}. We consider three classes
of countermeasures: detecting \system{} patches, preprocessing inputs to
disrupt \system{} patches, and boosting WF classifier robustness.  Unless
otherwise specified, experiments in this section run VarCNN-based attacks on
the \datasetlarge{} dataset, and the defender uses the \dfsmall{} feature
extractor to generate patches (see \S\ref{sec:eval}).



\subsection{Detecting \system{} Patches}
\label{subsec:detect}

An attacker can apply data analysis techniques to detect the presence of
patches in network traces. Detection can lead to possible identification of
patches and their removal. 

\para{Frequency Analysis.} An attacker who observes multiple visits to the
same website by the same user might identify defender's patch sequences using
frequency analysis, if the same patch is applied to multiple traces over a
period of time.  In practice, this frequency analysis might be challenging
because: the location of the patch is randomized, \system{} randomly flips a
subset of the dummy packets each time the patch is applied, and packet
sequences in patches might blend in naturally with unaltered network traces.

We test the feasibility of this countermeasure. We assume the attacker has
gathered network traces from $100$ separate visits by the same user to a
single website. For each trace, the attacker enumerates all packet sequences
with the length of the mini-patch (known to attacker). To address the random
flipping, the attacker merges packet sequences that have a Hamming distance
smaller than the flipping ratio (known to the attacker).  This produces a set
of packet sequences for each trace.  The sequence of each mini-patch should
appear in every set.

As the flip ratio increases, however, packet sequences from the patch start
to blend in with common packet sequences found frequently in benign
(unpatched) network traces. Patch sequences can thus blend in with normal
sequences, making their identification and removal difficult.

For example, for each website in \datasetlarge, we take $100$ original
network traces and perform frequency analysis. With a flip ratio of $0.2$, an
website has on average 45\% of its packet sequences showing up in every set
as false positives that look like potential patches.  An aggressive attacker 
can remove \emph{all} such high frequency packet sequences
before classifier  inference. 
Removing these high frequency packet sequences means the attacker's
classifier accuracy is reduced down to 7\%.
We perform this test using different values of flip ratio $\beta$ and show
the results in Table~\ref{tab:frequency}. 
success rate against a normal Var-CNN attack and 
frequency analysis attack. When the flip ratio is
$\geq 0.2$, frequency analysis countermeasure offer
no benefit against our defense. 

\begin{table}[t]
  \centering
    \resizebox{0.43\textwidth}{!}{
    \begin{tabular}{|c|c|c|c|}
    \hline
    \multirow{2}{*}{\textbf{Flip Ratio}} & \textbf{Protection Success } &
                                                                         \textbf{Success of} \\
 & \textbf{Var-CNN Attack} & \textbf{Frequency Analysis} \\ \hline
    0   & 98\% & 92\%   \\ \hline
    0.1 & 96\% & 26\%  \\ \hline
    0.2 & 96\% & 5\%  \\ \hline
    0.3 & 92\% & 0\% \\ \hline
    \end{tabular}
    }
    \caption{Impact of flip ratio on frequency analysis attack.}
  \label{tab:frequency}   
\end{table}

\para{Anomaly Detection.} We also consider attackers that using traditional
anomaly detection techniques to distinguish patches from normal packet
sequences.

We compute 80 distinct patches using \dfsmall{} for each of the 900 websites
in \datasetlarge{}, for a total of 72,000 patches. We compare these patches to 
72,000 natural packet sequences of the same length randomly chosen from
original traces (with random offsets). \textit{First}, we run a
2-means clustering on the features space of sequences (features extracted by
\dfsmall). The resulting clusters contain $47\%$ and $53\%$ patches
respectively and fail to distinguish patches from
normal sequences. \textit{Second}, we also try to separate patches using
supervised training. We train a DF classifier on packet sequences of
patches and natural packets. The classifier achieves $58\%$ accuracy,
only slightly outperforming random guessing.


\subsection{Preprocessing Network Traces}
A simple but effective approach to defeat adversarial patch
is to transform the input before using them for 
both training and inferencing~\cite{carlini2017adversarial,
feinman2017detecting}. In the case of WF attacks, 
we consider 3 possible transformations: i) add ``noise'' by randomly
flipping packets, ii) add ``noise'' by randomly dropping packets, iii)
truncating the network traces after the first $N$ packets. 
Note that the attacker is processing the traces 
locally and does not modify any packets in network. 

Our tests show that none of these transformations impact our defense in any
meaningful way. Flipping random packets in the trace degrades the
classification accuracy of the attacker classifier by $22\%$, but \system{}
remains successful $> 96\%$ (Table~\ref{tab:dropflip}).  Next,
Table~\ref{tab:dropflip} shows randomly dropping packets from the trace
decreases protection success rate by at most $2\%$, but more significantly,
degrades attacker's classification accuracy (by $32\%$).  Finally, truncating
the trace degrades the attacker's classification accuracy but \system{}
protection success remains $> 94\%$ (Table~\ref{tab:trimpackets}).

\vspace{-0.1in}
\subsection{More Robust Attack Classifiers}
\label{subsec: disrupt}
Next, we evaluate the feasibility of techniques to improve the robustness of
attacker classifiers against adversarial patches.

\para{Training on Multiple Patches.} In \S\ref{sec:eval}, the attacker
trained their classifier on traces protected by patch generated using a
single secret, and failed to achieve high attack performance. Here, we
consider a more general adversarial training approach, that trains the
attacker's model against patched traces generated from {\em multiple}
distinct secrets. Training against multiple targets gets the model closer to
a more complete coverage of the space of potential adversarial patches.


The attacker uses \system{} source code (with \dfsmall{} feature extractor)
to generate defended traces using $N$ randomly selected keys for each website
in \datasetlarge{} and trains a Var-CNN classifier. On the defender side, the
traces is protected using \dfsmall{} feature extractor but a different
key\footnote{We do not consider the case where the user and attacker's
  secrets match, since its probability is extremely small (\ie
  $\leq 50 / 400K$ in cases we tested).}.  Figure~\ref{fig:protected_train}
shows that as the attacker trains against more patched traces generated from
different secrets, there is a small gain in robustness by the attacker's
model. At its lowest point, the efficacy of \system{} patches drops to
$87\%$. Across all of our countermeasures tested, this is the most
effective. 


\para{Robust Attack Classifier.} In adversarial patch training, a model is iteratively retrained on 
adversarial patches generated by the model. This technique is similar 
to but different from training on defended traces, which are generated 
by the defender's model. In adversarial patch training, 
the classifier is iteratively trained on adversarial patches generated on 
the attacker model such that the classifier is robust to any type of 
adversarial patches. 

We evaluate \system{} against increasingly more robust classifiers on the
attacker side. 
Model robustness directly correlates with the number of epochs trained using 
adversarial training~\cite{li2020towards,wong2020fast,madry2017towards,tramer2019adversarial}. 
In our experiment, the defender uses the \dfsmall{} feature 
extractor (adversarially trained 
for $20$ epochs) to generate patches. We test the defense against attack classifier 
(Var-CNN) with varying robustness (adversarially trained from $20$ to $200$ 
epochs). Figure~\ref{tab:robust} shows the protection success rate 
remains $\leq 94\%$ against all the robust classifiers and does not trend downwards as
the model becomes more robust. This shows that 
generic adversarial training is less effective than training on 
defended traces, likely because the latter is more targeted towards 
specific types of perturbations generated by the defender.

\para{Training Orthogonal Classifiers.} Another countermeasure by the
attacker can be to explicitly avoid the features used by the defender to
generate patches, and to find other features for their trace
classification. If successful, it would produce a classifier that is largely
resistant to the patch. One approach is to build an attack classifier that
has orthogonal feature space as the defender's feature extractor.  The
attacker adds an additional loss term to model training to minimize the
neuron cosine similarity at an intermediate layer of the model. We train such
a classifier using \datasetlarge{} and Var-CNN model architecture.  In our
tests, the classifier only achieves $8\%$ normal classification accuracy
after $20$ epochs of training. This likely shows that there are not enough
alternative, orthogonal features that can accurately identify destination
websites from network traces. 

\para{Other Countermeasures Against Adversarial Patches.} There are other
defenses against adversarial patches explored in the computer vision
domain. However, most of them are limited to small input perturbations (less
than $5\%$ of the input)~\cite{akhtar2018defense,
  xiang2020patchguard,chiang2020certified}.  Others only work on contiguous
patches~\cite{naseer2019local, hayes2018visible,mccoyd2020minority}. To the
best of our knowledge, there exist no effective defense against larger
patches ($30\%$ of input) or nonconsecutive adversarial patches induced by
\system{}.  While it is always possible the community will develop more
effective defenses against larger adversarial patches, there exists a proven
lower bound on adversarial robustness that increases as the size of
perturbation increases~\cite{bhagoji2019lower}.  Thus, it is difficult to
be robust against large input perturbations without sacrificing
classification accuracy.



\section{Conclusion and Limitations}
\label{sec:discussion}

The primary contribution of \system{} is an effective defense against
website fingerprinting attacks (both traditional ML based and
DNN-based) that can run in real time to protect users. Our work is the
first to apply the concept of adversarial patches to WF defenses. 

However,
there are questions we have yet to study in detail. First, while most recent
defenses and attacks focus on a direction-only threat model and ignore
information leakage through time gaps~\cite{imani2019mockingbird,
  wang2017walkie,de2020trafficsliver,henri2020protecting,hou2020wf,sirinam2018deep},
some recent WF attacks~\cite{rahman2020tik,bhat2019var} also utilize time
gaps between packets to classify websites. We believe \system{} can be
extended to also defend against attacks utilizing time-gaps, and plan on
addressing this task in ongoing work. Second, we have not yet studied \system{}
deployed in the wild. Real measurements and tests in the wild may reveal additional
considerations,  leading to additional fine tuning of our system design.

{
 \small
 \bibliographystyle{acm}
 \balance
 \bibliography{wf_defense}

\begin{thebibliography}{10}

\bibitem{abe2016fingerprinting}
{\sc Abe, K., and Goto, S.}
\newblock Fingerprinting attack on tor anonymity using deep learning.
\newblock {\em APAN 42\/} (2016), 15--20.

\bibitem{akhtar2018defense}
{\sc Akhtar, N., Liu, J., and Mian, A.}
\newblock Defense against universal adversarial perturbations.
\newblock In {\em Proc. of {CVPR}\/} (2018), pp.~3389--3398.

\bibitem{alexa}
\url{https://www.alexa.com}, 2017.

\bibitem{bagdasaryan2020blind}
{\sc Bagdasaryan, E., and Shmatikov, V.}
\newblock Blind backdoors in deep learning models.
\newblock {\em arXiv preprint arXiv:2005.03823\/} (2020).

\bibitem{barker2006recommendation}
{\sc Barker, E., Barker, E., Burr, W., Polk, W., Smid, M., et~al.}
\newblock {\em Recommendation for key management: Part 1: General}.
\newblock NIST, 2006.

\bibitem{bhagoji2019lower}
{\sc Bhagoji, A.~N., Cullina, D., and Mittal, P.}
\newblock Lower bounds on adversarial robustness from optimal transport.
\newblock In {\em Proc. of {NeurIPS}\/} (2019), pp.~7498--7510.

\bibitem{bhat2019var}
{\sc Bhat, S., Lu, D., Kwon, A., and Devadas, S.}
\newblock Var-cnn: A data-efficient website fingerprinting attack based on deep
  learning.
\newblock {\em PoPETS 2019}, 4 (2019), 292--310.

\bibitem{brown2017adversarial}
{\sc Brown, T.~B., Man{\'e}, D., Roy, A., Abadi, M., and Gilmer, J.}
\newblock Adversarial patch.
\newblock {\em arXiv preprint arXiv:1712.09665\/} (2017).

\bibitem{cai2014cs}
{\sc Cai, X., Nithyanand, R., and Johnson, R.}
\newblock Cs-buflo: A congestion sensitive website fingerprinting defense.
\newblock In {\em Proc. of {WPES}\/} (2014), pp.~121--130.

\bibitem{cai2014systematic}
{\sc Cai, X., Nithyanand, R., Wang, T., Johnson, R., and Goldberg, I.}
\newblock A systematic approach to developing and evaluating website
  fingerprinting defenses.
\newblock In {\em Proc. of {CCS}\/} (2014), pp.~227--238.

\bibitem{cai2012touching}
{\sc Cai, X., Zhang, X.~C., Joshi, B., and Johnson, R.}
\newblock Touching from a distance: Website fingerprinting attacks and
  defenses.
\newblock In {\em Proc. of {CCS}\/} (2012), pp.~605--616.

\bibitem{carlini2017adversarial}
{\sc Carlini, N., and Wagner, D.}
\newblock Adversarial examples are not easily detected: Bypassing ten detection
  methods.
\newblock In {\em Proc. of {AISec}\/} (2017).

\bibitem{carlini}
{\sc Carlini, N., and Wagner, D.}
\newblock Towards evaluating the robustness of neural networks.
\newblock In {\em Proc. of {IEEE S\&P}\/} (2017).

\bibitem{charles2019geometric}
{\sc Charles, Z., Rosenberg, H., and Papailiopoulos, D.}
\newblock A geometric perspective on the transferability of adversarial
  directions.
\newblock In {\em Proc. of AISTAT\/} (2019), PMLR, pp.~1960--1968.

\bibitem{chen2018ead}
{\sc Chen, P.-Y., Sharma, Y., Zhang, H., Yi, J., and Hsieh, C.-J.}
\newblock Ead: elastic-net attacks to deep neural networks via adversarial
  examples.
\newblock In {\em Proc. of {AAAI}\/} (2018).

\bibitem{chen2018shapeshifter}
{\sc Chen, S.-T., Cornelius, C., Martin, J., and Chau, D. H.~P.}
\newblock Shapeshifter: Robust physical adversarial attack on faster r-cnn
  object detector.
\newblock In {\em Proc. of {ECML PKDD}\/} (2018), Springer, pp.~52--68.

\bibitem{cherubin2017bayes}
{\sc Cherubin, G.}
\newblock Bayes, not na{\"\i}ve: Security bounds on website fingerprinting
  defenses.
\newblock {\em PoPETS 2017}, 4 (2017), 215--231.

\bibitem{cherubin2017website}
{\sc Cherubin, G., Hayes, J., and Juarez, M.}
\newblock Website fingerprinting defenses at the application layer.
\newblock {\em PoPETS 2017}, 2 (2017), 186--203.

\bibitem{chiang2020certified}
{\sc Chiang, P.-y., Ni, R., Abdelkader, A., Zhu, C., Studor, C., and Goldstein,
  T.}
\newblock Certified defenses for adversarial patches.
\newblock {\em arXiv preprint arXiv:2003.06693\/} (2020).

\bibitem{danezis2003statistical}
{\sc Danezis, G.}
\newblock Statistical disclosure attacks.
\newblock In {\em Proc. of {IFIP SEC}\/} (2003), Springer, pp.~421--426.

\bibitem{de2020trafficsliver}
{\sc De~la Cadena, W., Mitseva, A., Hiller, J., Pennekamp, J., Reuter, S.,
  Filter, J., Engel, T., Wehrle, K., and Panchenko, A.}
\newblock Trafficsliver: Fighting website fingerprinting attacks with traffic
  splitting.
\newblock In {\em Proc. of {CCS}\/} (2020), pp.~1971--1985.

\bibitem{demontis2019adversarial}
{\sc Demontis, A., Melis, M., Pintor, M., Jagielski, M., Biggio, B., Oprea, A.,
  Nita-Rotaru, C., and Roli, F.}
\newblock Why do adversarial attacks transfer? explaining transferability of
  evasion and poisoning attacks.
\newblock In {\em Proc. of {USENIX} Security\/} (2019), pp.~321--338.

\bibitem{dyer2012peek}
{\sc Dyer, K.~P., Coull, S.~E., Ristenpart, T., and Shrimpton, T.}
\newblock Peek-a-boo, i still see you: Why efficient traffic analysis
  countermeasures fail.
\newblock In {\em Proc. of {IEEE S\&P}\/} (2012), IEEE, pp.~332--346.

\bibitem{ebrahimi2017hotflip}
{\sc Ebrahimi, J., Rao, A., Lowd, D., and Dou, D.}
\newblock Hotflip: White-box adversarial examples for text classification.
\newblock {\em arXiv preprint arXiv:1712.06751\/} (2017).

\bibitem{feinman2017detecting}
{\sc Feinman, R., Curtin, R.~R., Shintre, S., and Gardner, A.~B.}
\newblock Detecting adversarial samples from artifacts.
\newblock {\em arXiv:1703.00410\/} (2017).

\bibitem{gong2020zero}
{\sc Gong, J., and Wang, T.}
\newblock Zero-delay lightweight defenses against website fingerprinting.
\newblock In {\em Proc. of {USENIX} Security\/} (2020), pp.~717--734.

\bibitem{goodfellow2014explaining}
{\sc Goodfellow, I.~J., Shlens, J., and Szegedy, C.}
\newblock Explaining and harnessing adversarial examples.
\newblock {\em arXiv preprint arXiv:1412.6572\/} (2014).

\bibitem{grosse2017adversarial}
{\sc Grosse, K., Papernot, N., Manoharan, P., Backes, M., and McDaniel, P.}
\newblock Adversarial examples for malware detection.
\newblock In {\em Proc. of {ESORICS}\/} (2017), Springer, pp.~62--79.

\bibitem{hayes2018visible}
{\sc Hayes, J.}
\newblock On visible adversarial perturbations \& digital watermarking.
\newblock In {\em Proc. of {CVPR}\/} (2018), pp.~1597--1604.

\bibitem{hayes2016k}
{\sc Hayes, J., and Danezis, G.}
\newblock k-fingerprinting: A robust scalable website fingerprinting technique.
\newblock In {\em Proc. of {USENIX} Security\/} (2016), pp.~1187--1203.

\bibitem{henri2020protecting}
{\sc Henri, S., Garcia-Aviles, G., Serrano, P., Banchs, A., and Thiran, P.}
\newblock Protecting against website fingerprinting with multihoming.
\newblock {\em PoPETS 2020}, 2 (2020), 89--110.

\bibitem{herrmann2009website}
{\sc Herrmann, D., Wendolsky, R., and Federrath, H.}
\newblock Website fingerprinting: attacking popular privacy enhancing
  technologies with the multinomial na{\"\i}ve-bayes classifier.
\newblock In {\em Proc. of CCSW\/} (2009), pp.~31--42.

\bibitem{holland2020regulator}
{\sc Holland, J.~K., and Hopper, N.}
\newblock Regulator: A powerful website fingerprinting defense.
\newblock {\em arXiv preprint arXiv:2012.06609\/} (2020).

\bibitem{hou2020wf}
{\sc Hou, C., Gou, G., Shi, J., Fu, P., and Xiong, G.}
\newblock Wf-gan: Fighting back against website fingerprinting attack using
  adversarial learning.
\newblock In {\em Proc. of ISCC\/} (2020), IEEE, pp.~1--7.

\bibitem{bugs-features}
{\sc Ilyas, A., Santurkar, S., Tsipras, D., Engstrom, L., Tran, B., and Madry,
  A.}
\newblock Adversarial examples are not bugs, they are features.
\newblock In {\em Proc. of {NeurIPS}\/} (2019).

\bibitem{juarez2016toward}
{\sc Juarez, M., Imani, M., Perry, M., Diaz, C., and Wright, M.}
\newblock Toward an efficient website fingerprinting defense.
\newblock In {\em Proc. of {ESORICS}\/} (2016), Springer, pp.~27--46.

\bibitem{karmon2018lavan}
{\sc Karmon, D., Zoran, D., and Goldberg, Y.}
\newblock Lavan: Localized and visible adversarial noise.
\newblock In {\em Proc. of {ICML}\/} (2018), PMLR, pp.~2507--2515.

\bibitem{kolosnjaji2018adversarial}
{\sc Kolosnjaji, B., Demontis, A., Biggio, B., Maiorca, D., Giacinto, G.,
  Eckert, C., and Roli, F.}
\newblock Adversarial malware binaries: Evading deep learning for malware
  detection in executables.
\newblock In {\em Proc. of {EUSIPCO}\/} (2018), IEEE, pp.~533--537.

\bibitem{kurakin2016adversarial}
{\sc Kurakin, A., Goodfellow, I., and Bengio, S.}
\newblock Adversarial examples in the physical world.
\newblock {\em arXiv preprint arXiv:1607.02533\/} (2016).

\bibitem{lee2004first}
{\sc Lee, J.}
\newblock {\em A first course in combinatorial optimization}, vol.~36.
\newblock Cambridge University Press, 2004.

\bibitem{li2020towards}
{\sc Li, B., Wang, S., Jana, S., and Carin, L.}
\newblock Towards understanding fast adversarial training.
\newblock {\em arXiv preprint arXiv:2006.03089\/} (2020).

\bibitem{li2018measuring}
{\sc Li, S., Guo, H., and Hopper, N.}
\newblock Measuring information leakage in website fingerprinting attacks and
  defenses.
\newblock In {\em Proc. of {CCS}\/} (2018), pp.~1977--1992.

\bibitem{liberatore2006inferring}
{\sc Liberatore, M., and Levine, B.~N.}
\newblock Inferring the source of encrypted http connections.
\newblock In {\em Proc. of {CCS}\/} (2006), pp.~255--263.

\bibitem{lu2010website}
{\sc Lu, L., Chang, E.-C., and Chan, M.~C.}
\newblock Website fingerprinting and identification using ordered feature
  sequences.
\newblock In {\em Proc. of {ESORICS}\/} (2010), Springer, pp.~199--214.

\bibitem{madry2017towards}
{\sc Madry, A., Makelov, A., Schmidt, L., Tsipras, D., and Vladu, A.}
\newblock Towards deep learning models resistant to adversarial attacks.
\newblock In {\em Proc. of {ICLR}\/} (2018).

\bibitem{mallesh2011analysis}
{\sc Mallesh, N., and Wright, M.}
\newblock An analysis of the statistical disclosure attack and receiver-bound
  cover.
\newblock {\em Computers \& Security 30}, 8 (2011), 597--612.

\bibitem{mccoyd2020minority}
{\sc McCoyd, M., Park, W., Chen, S., Shah, N., Roggenkemper, R., Hwang, M.,
  Liu, J.~X., and Wagner, D.}
\newblock Minority reports defense: Defending against adversarial patches.
\newblock {\em arXiv preprint arXiv:2004.13799\/} (2020).

\bibitem{moosavi2017universal}
{\sc Moosavi-Dezfooli, S.-M., Fawzi, A., Fawzi, O., and Frossard, P.}
\newblock Universal adversarial perturbations.
\newblock In {\em Proc. of {CVPR}\/} (2017), pp.~1765--1773.

\bibitem{naseer2019local}
{\sc Naseer, M., Khan, S., and Porikli, F.}
\newblock Local gradients smoothing: Defense against localized adversarial
  attacks.
\newblock In {\em Proc. of {WACV}\/} (2019), pp.~1300--1307.

\bibitem{nasr2020blind}
{\sc Nasr, M., Bahramali, A., and Houmansadr, A.}
\newblock Blind adversarial network perturbations.
\newblock {\em arXiv preprint arXiv:2002.06495\/} (2020).

\bibitem{nithyanand2014glove}
{\sc Nithyanand, R., Cai, X., and Johnson, R.}
\newblock Glove: A bespoke website fingerprinting defense.
\newblock In {\em Proc. of {WPES}\/} (2014), pp.~131--134.

\bibitem{panchenko2016website}
{\sc Panchenko, A., Lanze, F., Pennekamp, J., Engel, T., Zinnen, A., Henze, M.,
  and Wehrle, K.}
\newblock Website fingerprinting at internet scale.
\newblock In {\em Proc. of {NDSS}\/} (2016).

\bibitem{panchenko2011website}
{\sc Panchenko, A., Niessen, L., Zinnen, A., and Engel, T.}
\newblock Website fingerprinting in onion routing based anonymization networks.
\newblock In {\em Proc. of {WPES}\/} (2011), pp.~103--114.

\bibitem{papernot2016crafting}
{\sc Papernot, N., McDaniel, P., Swami, A., and Harang, R.}
\newblock Crafting adversarial input sequences for recurrent neural networks.
\newblock In {\em Proc. of MILCOM\/} (2016), IEEE, pp.~49--54.

\bibitem{wtfpad2015takedown}
{\sc Perry, M.}
\newblock Tor protocol specification proposal, 2015.
\newblock
  \url{https://gitweb.torproject.org/torspec.git/tree/proposals/254-padding-negotiation.txt}.

\bibitem{petrov2019measuring}
{\sc Petrov, D., and Hospedales, T.~M.}
\newblock Measuring the transferability of adversarial examples.
\newblock {\em arXiv preprint arXiv:1907.06291\/} (2019).

\bibitem{pmlr-v119-pydi20a}
{\sc Pydi, M.~S., and Jog, V.}
\newblock Adversarial risk via optimal transport and optimal couplings.
\newblock In {\em Proc. of {ICML}\/} (2020), pp.~7814--7823.

\bibitem{imani2019mockingbird}
{\sc Rahman, M.~S., Imani, M., Mathews, N., and Wright, M.}
\newblock Mockingbird: Defending against deep-learning-based website
  fingerprinting attacks with adversarial traces.
\newblock {\em TIFS\/} (2020), 1594--1609.

\bibitem{rahman2020tik}
{\sc Rahman, M.~S., Sirinam, P., Mathews, N., Gangadhara, K.~G., and Wright,
  M.}
\newblock Tik-tok: The utility of packet timing in website fingerprinting
  attacks.
\newblock {\em PoPETS 2020}, 3 (2020), 5--24.

\bibitem{rao2020adversarial}
{\sc Rao, S., Stutz, D., and Schiele, B.}
\newblock Adversarial training against location-optimized adversarial patches.
\newblock {\em arXiv preprint arXiv:2005.02313\/} (2020).

\bibitem{ren2019generating}
{\sc Ren, S., Deng, Y., He, K., and Che, W.}
\newblock Generating natural language adversarial examples through probability
  weighted word saliency.
\newblock In {\em Proc. of {ACL}\/} (2019), pp.~1085--1097.

\bibitem{Rimmer2018}
{\sc Rimmer, V., Preuveneers, D., Juarez, M., Van~Goethem, T., and Joosen, W.}
\newblock Automated website fingerprinting through deep learning.
\newblock In {\em Proc. of {NDSS}\/} (2018).

\bibitem{shafahi2018adversarial}
{\sc Shafahi, A., Huang, W.~R., Studer, C., Feizi, S., and Goldstein, T.}
\newblock Are adversarial examples inevitable?
\newblock In {\em ICLR\/} (2019).

\bibitem{shalev2014understanding}
{\sc Shalev-Shwartz, S., and Ben-David, S.}
\newblock {\em Understanding machine learning: From theory to algorithms}.
\newblock Cambridge university press, 2014.

\bibitem{sirinam2018deep}
{\sc Sirinam, P., Imani, M., Juarez, M., and Wright, M.}
\newblock Deep fingerprinting: Undermining website fingerprinting defenses with
  deep learning.
\newblock In {\em Proc. of {CCS}\/} (2018), pp.~1928--1943.

\bibitem{sirinam2019triplet}
{\sc Sirinam, P., Mathews, N., Rahman, M.~S., and Wright, M.}
\newblock Triplet fingerprinting: More practical and portable website
  fingerprinting with n-shot learning.
\newblock In {\em Proc. of {CCS}\/} (2019), pp.~1131--1148.

\bibitem{song2018physical}
{\sc Song, D., Eykholt, K., Evtimov, I., Fernandes, E., Li, B., Rahmati, A.,
  Tramer, F., Prakash, A., and Kohno, T.}
\newblock Physical adversarial examples for object detectors.
\newblock In {\em Proc. of {WOOT}\/} (2018).

\bibitem{suciu2019exploring}
{\sc Suciu, O., Coull, S.~E., and Johns, J.}
\newblock Exploring adversarial examples in malware detection.
\newblock In {\em Proc. of {SPW}\/} (2019), IEEE, pp.~8--14.

\bibitem{suciu2018does}
{\sc Suciu, O., M{\u{a}}rginean, R., Kaya, Y., Daum{\'e}~III, H., and
  Dumitra{\c{s}}, T.}
\newblock When does machine learning fail? generalized transferability for
  evasion and poisoning attacks.
\newblock In {\em Proc. of {USENIX} Security\/} (2018).

\bibitem{sun2002statistical}
{\sc Sun, Q., Simon, D.~R., Wang, Y.-M., Russell, W., Padmanabhan, V.~N., and
  Qiu, L.}
\newblock Statistical identification of encrypted web browsing traffic.
\newblock In {\em Proc. of {IEEE S\&P}\/} (2002), IEEE, pp.~19--30.

\bibitem{tramer2019adversarial}
{\sc Tramer, F., and Boneh, D.}
\newblock Adversarial training and robustness for multiple perturbations.
\newblock In {\em Proc. of {NeurIPS}\/} (2019), pp.~5866--5876.

\bibitem{uesato2018adversarial}
{\sc Uesato, J., O'Donoghue, B., Oord, A. v.~d., and Kohli, P.}
\newblock Adversarial risk and the dangers of evaluating against weak attacks.
\newblock {\em arXiv preprint arXiv:1802.05666\/} (2018).

\bibitem{wallace2019universal}
{\sc Wallace, E., Feng, S., Kandpal, N., Gardner, M., and Singh, S.}
\newblock Universal adversarial triggers for attacking and analyzing nlp.
\newblock {\em arXiv preprint arXiv:1908.07125\/} (2019).

\bibitem{wang2014effective}
{\sc Wang, T., Cai, X., Nithyanand, R., Johnson, R., and Goldberg, I.}
\newblock Effective attacks and provable defenses for website fingerprinting.
\newblock In {\em Proc. of {USENIX} Security\/} (2014), pp.~143--157.

\bibitem{wang2017walkie}
{\sc Wang, T., and Goldberg, I.}
\newblock Walkie-talkie: An efficient defense against passive website
  fingerprinting attacks.
\newblock In {\em Proc. of {USENIX} Security\/} (2017), pp.~1375--1390.

\bibitem{wang2019natural}
{\sc Wang, X., Jin, H., and He, K.}
\newblock Natural language adversarial attacks and defenses in word level.
\newblock {\em arXiv preprint arXiv:1909.06723\/} (2019).

\bibitem{wong2020fast}
{\sc Wong, E., Rice, L., and Kolter, J.~Z.}
\newblock Fast is better than free: Revisiting adversarial training.
\newblock In {\em Proc. of {ICLR}\/} (2020).

\bibitem{wright2004predecessor}
{\sc Wright, M.~K., Adler, M., Levine, B.~N., and Shields, C.}
\newblock The predecessor attack: An analysis of a threat to anonymous
  communications systems.
\newblock {\em TISSEC 7}, 4 (2004), 489--522.

\bibitem{wright2008passive}
{\sc Wright, M.~K., Adler, M., Levine, B.~N., and Shields, C.}
\newblock Passive-logging attacks against anonymous communications systems.
\newblock {\em TISSEC 11}, 2 (2008), 1--34.

\bibitem{wu2019making}
{\sc Wu, Z., Lim, S.-N., Davis, L., and Goldstein, T.}
\newblock Making an invisibility cloak: Real world adversarial attacks on
  object detectors.
\newblock {\em arXiv preprint arXiv:1910.14667\/} (2019).

\bibitem{xiang2020patchguard}
{\sc Xiang, C., Bhagoji, A.~N., Sehwag, V., and Mittal, P.}
\newblock Patchguard: Provable defense against adversarial patches using masks
  on small receptive fields.
\newblock {\em arXiv preprint arXiv:2005.10884\/} (2020).

\bibitem{transfer2014}
{\sc Yosinski, J., Clune, J., Bengio, Y., and Lipson, H.}
\newblock How transferable are features in deep neural networks?
\newblock In {\em Proc. of {NeurIPS}\/} (2014).

\bibitem{zhang2020adversarial}
{\sc Zhang, W.~E., Sheng, Q.~Z., Alhazmi, A., and Li, C.}
\newblock Adversarial attacks on deep-learning models in natural language
  processing: A survey.
\newblock {\em TIST 11}, 3 (2020), 1--41.

\end{thebibliography}
}

\clearpage

\appendix






\section{Effectiveness against Intersection Attacks}
\label{sec:trad}

Intersection attacks are popular attacks against anonymity 
systems~\cite{wright2004predecessor,
wright2008passive,mallesh2011analysis,danezis2003statistical}. 
In the context of website fingerprinting, an intersection
attacker assumes the victim visits the same websites
regularly and monitors the victim's network 
traffic for a longer time period (\eg every day 
over multiple days). Using the additional 
information, the attacker is able
to make better inferences on user's traces.
We test \system{} against an intersection attack 
intersection attack used by a previous
defense~\cite{imani2019mockingbird}. 

For each of the victim's browsing traces, the attacker 
logs the top-k results of the attack classifier (top-k is the $k$ 
output websites that have the highest probabilities that the trace belongs to). 
If a website consistently appear in the top-k results, 
then the attacker may believe that this site is in fact the website
that the user is visiting. We choose the same attack setup 
as \cite{imani2019mockingbird}. We assume attacker observes $5$ 
separate visits to the same website. 
Then the attacker saves the top-10 labels predicted 
by the classifier (using Var-CNN attack). 
The attack is successful if the selected label is the 
most frequently appeared label within joint list of $50$ labels 
($5$ sets of $10$ top-10 labels). We test on $20$ 
randomly selected websites from \datasetlarge. 
For all cases, the frequency of the correct website is 
far from the most frequent one (in the best case it ranks 
number $4$ out of $41$ websites) and the correct website never 
appears in all $5$ rounds. Thus, we conclude that intersection 
attacks is not effective against \system.

\begin{table*}
  \centering
  \resizebox{0.65\textwidth}{!}{
\begin{tabular}{ccccccc}
\hline
\multirow{2}{*}{\textbf{Dataset}} & \multirow{2}{*}{\textbf{\begin{tabular}[c]{@{}c@{}}Defender's\\ Feature Extractor\end{tabular}}} & \multicolumn{5}{c}{\textbf{Protection Success Rate Against WF Attacks}} \\ \cline{3-7} 
 &  & \textbf{k-NN} & \textbf{k-FP} & \textbf{CUMUL} & \textbf{DF} & \textbf{Var-CNN} \\ \hline
\multirow{4}{*}{\datasetsmall} & \dfsmall & 96\% & 97\% & 93\% & 95\% & 91\% \\
 & \dflarge & 96\% & 94\% & 97\% & 92\% & 93\% \\
 & \varsmall & 94\% & 92\% & 95\% & 93\% & 96\% \\
 & \varlarge & 97\% & 96\% & 95\% & 95\% & 94\% \\ \hline
\multirow{4}{*}{\datasetlarge} & \dfsmall & 94\% & 92\% & 95\% & 93\% & 92\% \\
 & \dflarge & 95\% & 94\% & 96\% & 97\% & 91\% \\
 & \varsmall & 94\% & 97\% & 98\% & 94\% & 92\% \\
 & \varlarge & 96\% & 95\% & 96\% & 95\% & 97\% \\ \hline
\end{tabular}
  }
  \caption{Protection performance of \system{} using non-robust feature extractor against different WF attacks when transfering to classifiers trained on different datasets and/or architecture. 
  }
  \label{tab:transfer_nonrobust}
\end{table*}

\begin{figure*}
\begin{minipage}{0.32\textwidth}
  \centering
  \includegraphics[width=1\textwidth]{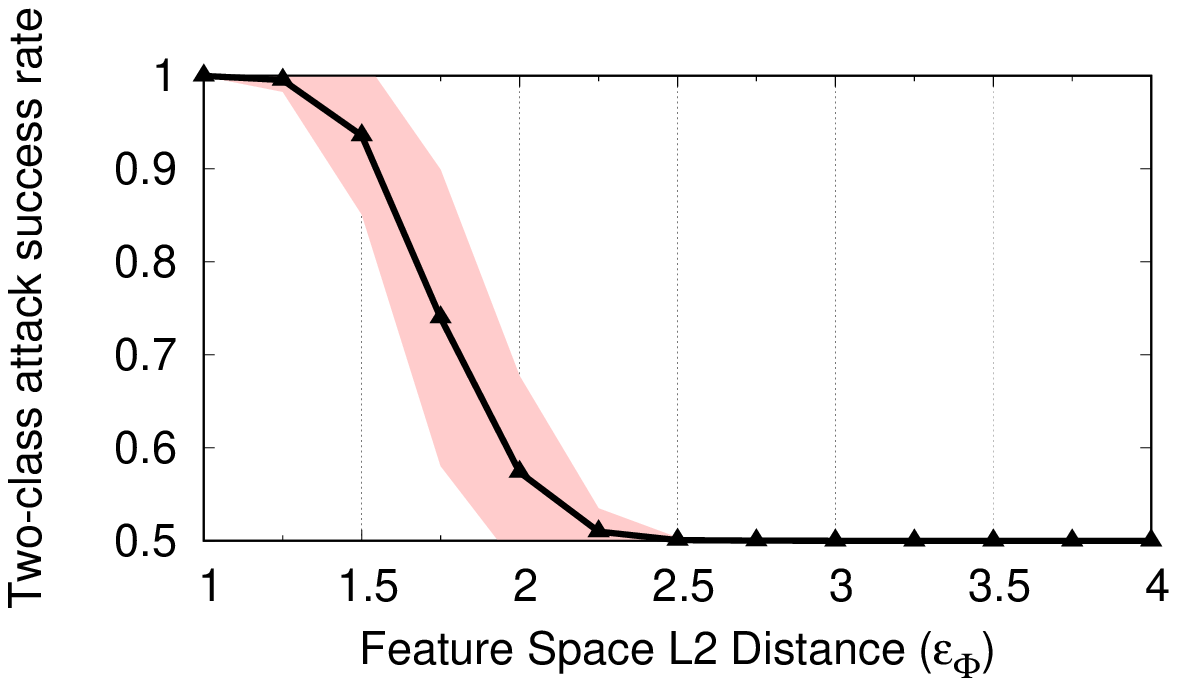}
  \caption{Upper bound on the effectiveness of \emph{any attack classifier} for a fixed feature extractor $\Phi$, averaged over $500$ choices of source-pairs.}
  \label{fig:theory1}
  \end{minipage}
  \hfill
  \begin{minipage}{0.32\textwidth}
    \centering
  \includegraphics[width=1\textwidth]{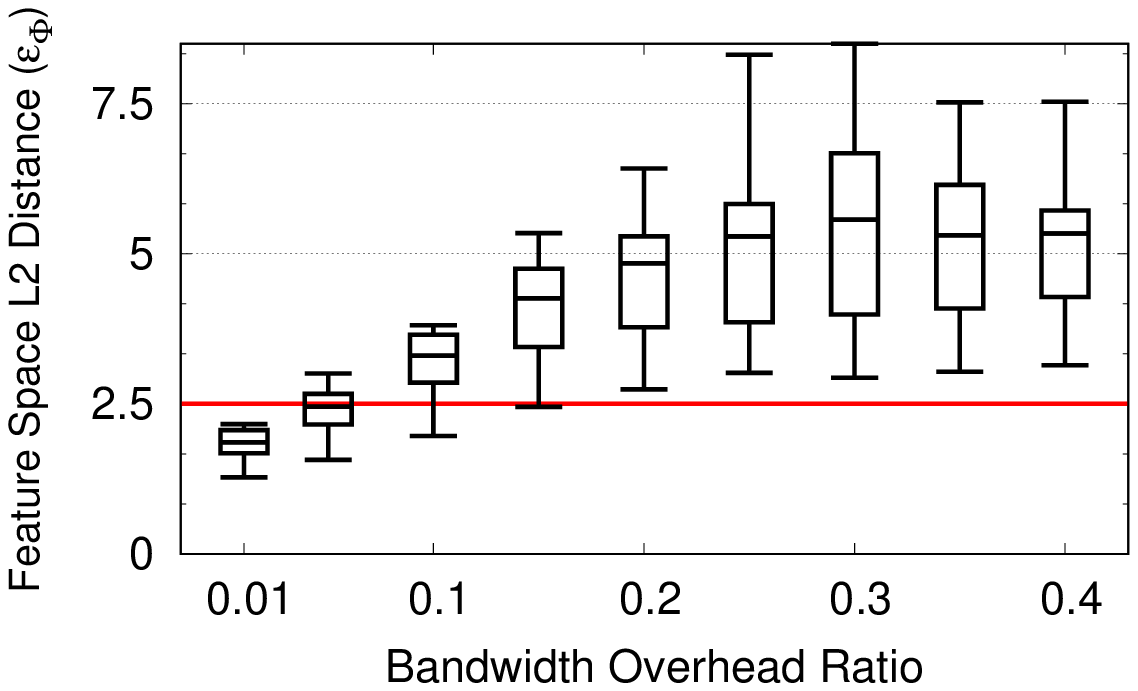}
\caption{Variation in the distance moved in feature space with the input bandwidth overhead, averaged over a 100 different targets for a fixed source.}
  \label{fig:theory2}
  \end{minipage}
  \hfill
\begin{minipage}{0.32\textwidth}
	\centering
\includegraphics[width=1\textwidth]{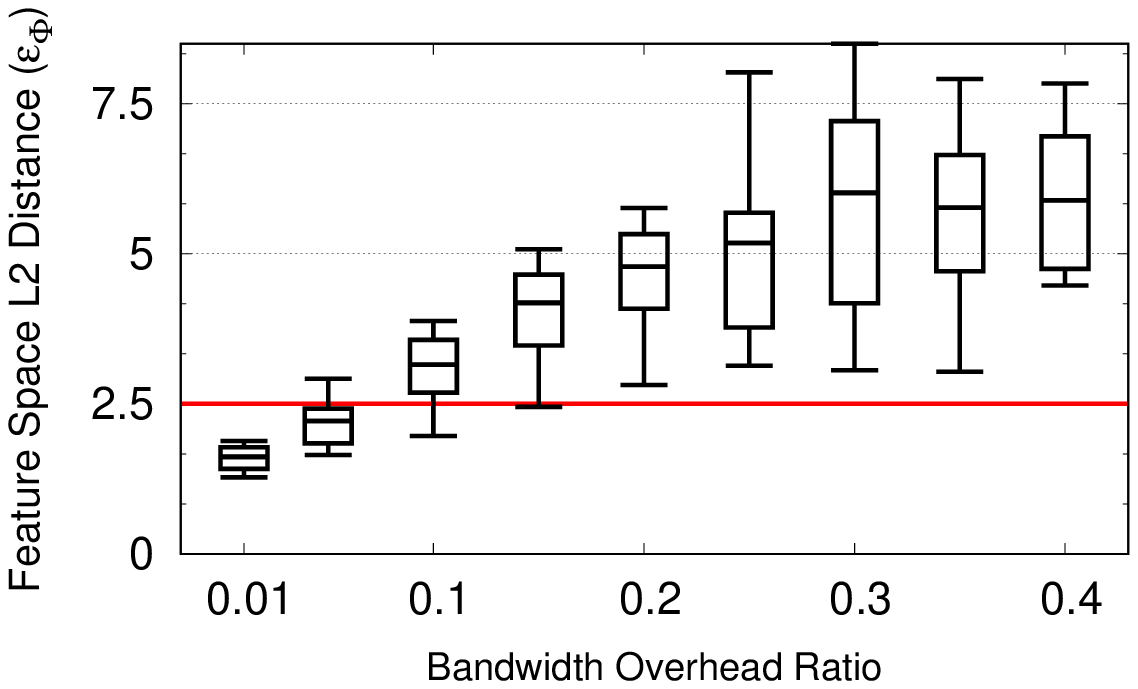}
\caption{Variation in the distance moved in feature space with the input bandwidth variation, averaged over a single target for 20 different sources.}
\label{fig:theory3}
\end{minipage}
\end{figure*}

\section{Theoretical Justification of Defense}
\label{sec:advance}

We show that \system{} provides provable robustness guarantees when
both the attacker and defender use the same fixed feature extractor
$\Phi$.

We use recent theoretical results~\cite{bhagoji2019lower,pmlr-v119-pydi20a} on learning $2$-class classifiers in the presence of adversarial examples which has shown that as the strength of the attacker (defender in our case) increases, the $0-1$ loss incurred by \emph{any classifier} is lower bounded by a transportation cost between the conditional distributions of the two classes. In the case of image classification, which is the example considered in previous work, the budget is typically too small to observe interesting behavior in terms of this lower bound. 

However, since we consider network traffic traces to which much larger amounts of perturbation can be added, we encounter non-trivial regimes of this bound. This implies that with a sufficiently large bandwidth, no classifier used by the attacker will be able to distinguish between traces from the source and target classes. In order to demonstrate this, we make the following assumptions:
\begin{enumerate}
  \item The attacker is attempting to distinguish if a trace $x$ belongs to the original class $W$ or the target class $T$, with distributions $P_W$ and $P_T$ respectively, both of which may be defended (i.e. adversarially perturbed).
  \item The attacker uses a classifier function $F$ acting over the feature space $\Phi(\mathcal{X})$, where $F$ can be any measurable function and $\Phi: \mathcal{X}\rightarrow\mathbb{R}^k$ is any fixed feature extractor. The resulting end-to-end classifer is represented by the tuple $(\Phi,F)$.
  \item The defender has an $\ell_2$ norm perturbation budget of $\epsilon_{\Phi}$ in the feature space, matching the choice of $\text{Dist}(\cdot,\cdot)$ in Eq.~\ref{eq:adv_patch}. The feature space budget is related to the input space bandwidth overhead $R$ by a mapping function $M$ that maps balls of radius $R$ in the input space to balls of radius at least $\epsilon_{\Phi}$ in the feature space.
\end{enumerate}

Given these assumptions, we can now state the following theorem, adapted from Bhagoji et al. \cite{bhagoji2019lower}:

\begin{theorem}[Upper bound on attacker success]
For any pair of classes $W$ and $T$ with conditional probability distributions $P_W$ and $P_T$, and the joint distribution $P$, over $\mathcal{X}$, and with a fixed feature extractor $\Phi$, the attack success rate of any classifier $F$ is 
\begin{align}
  ASR((\Phi,F),P,R) \leq \frac{1}{2} \left( 1 + C_{\epsilon_{\Phi}}(\Phi(P_W),\Phi(P_T)) \right).
\end{align}
\label{thm: upper_bound}
\end{theorem}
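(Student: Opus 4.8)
The plan is to recognize the attack success rate as one minus the optimal adversarial $0$-$1$ risk of a two-class problem posed \emph{in feature space}, and then invoke the optimal-transport lower bound of Bhagoji et al.~\cite{bhagoji2019lower} (shown to be tight by Pydi and Jog~\cite{pmlr-v119-pydi20a}). First I would push the classification problem forward through $\Phi$: since $F$ only ever sees $\Phi(x)$, correctly classifying a possibly-defended trace from $P_W$ (resp.\ $P_T$) is the same event as correctly classifying a possibly-defended sample from $\Phi(P_W)$ (resp.\ $\Phi(P_T)$) on $\mathbb{R}^k$. By Assumption~3, a defender with input bandwidth overhead $R$ can realize, through the mapping $M$ composed with the injection function $\Pi(p,x,s)$, a perturbation of the feature vector of $\ell_2$ magnitude at least $\epsilon_\Phi$; so the defender acts, in feature space, as an $\ell_2$-bounded adversary of budget at least $\epsilon_\Phi$ against $F$. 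Under the equal-prior convention implicit in the $\tfrac12$ prefactor, this identifies $ASR((\Phi,F),P,R) = 1 - \mathcal{R}_{\epsilon_\Phi}(F)$, where $\mathcal{R}_{\epsilon_\Phi}(F)$ is the adversarial $0$-$1$ risk of $F$ on $(\Phi(P_W),\Phi(P_T))$ against that adversary.

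\textbf{Applying the transport bound.} For the symmetric equal-prior binary problem with class-conditionals $\mu_0 = \Phi(P_W)$, $\mu_1 = \Phi(P_T)$ and adversarial $\ell_2$ budget $\epsilon_\Phi$, the result of \cite{bhagoji2019lower} gives, for every measurable classifier, $\mathcal{R}_{\epsilon_\Phi}(F) \ge \tfrac12\bigl(1 - C_{\epsilon_\Phi}(\mu_0,\mu_1)\bigr)$, where $C_{\epsilon_\Phi}(\mu_0,\mu_1) = \inf_{\pi \in \Pi(\mu_0,\mu_1)} \mathbb{E}_{(z,z')\sim\pi}\bigl[\mathbf{1}\{\|z-z'\|_2 > 2\epsilon_\Phi\}\bigr]$ is the optimal transport cost for the $0/1$ ``collision'' cost. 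Since this risk is monotone nondecreasing in the adversary's budget, a budget of \emph{at least} $\epsilon_\Phi$ keeps the lower bound valid a fortiori. Rearranging yields
\begin{align}
ASR((\Phi,F),P,R) = 1 - \mathcal{R}_{\epsilon_\Phi}(F) &\le 1 - \tfrac12\bigl(1 - C_{\epsilon_\Phi}(\Phi(P_W),\Phi(P_T))\bigr) \notag \\
&= \tfrac12\bigl(1 + C_{\epsilon_\Phi}(\Phi(P_W),\Phi(P_T))\bigr), \notag
\end{align}
uniformly over all measurable $F$, which is the claim.

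\textbf{Expected obstacle.} The transport inequality is imported wholesale, so the real work sits in Assumption~3 and the reduction it enables: one must justify that an input-space patch of overhead $R$ moves the feature vector by at least $\epsilon_\Phi$ in $\ell_2$ \emph{uniformly enough over the supports of $P_W$ and $P_T$} that the forward image of the input perturbation set under $\Phi$ contains a true $\epsilon_\Phi$-ball, so that the feature-space adversary of the bound genuinely underestimates the defender. There is no analytic handle on $M$ for a deep feature extractor, which is exactly why the input-to-feature budget correspondence is pinned down empirically (Figures~\ref{fig:theory2} and~\ref{fig:theory3}). A secondary, routine point is the measure-theoretic bookkeeping needed to apply optimal-transport duality to $\Phi(P_W)$ and $\Phi(P_T)$ (Borel measurability of $\Phi$, existence of an optimizing coupling), standard once $\Phi$ is continuous. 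I would also flag that the bound characterizes what an \emph{optimal} defender can force on any attacker; tying it to the patch actually returned by the optimization in Eq.~(\ref{eq:trafficpatch}) remains a separate, empirical question.
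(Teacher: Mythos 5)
Your proposal is correct and follows essentially the same route as the paper's proof: both reduce the input-space defender with budget $R$ to a feature-space $\ell_2$ adversary with budget $\epsilon_{\Phi}$ via Assumption~3 (the ball-containment/conservative-estimate step), and then invoke Theorem~1 of Bhagoji et al.\ to bound the resulting feature-space attack success rate by the optimal-transport quantity $C_{\epsilon_{\Phi}}(\Phi(P_W),\Phi(P_T))$. Your phrasing in terms of adversarial risk $1-ASR$ and your flagging of the input-to-feature budget correspondence as the real (empirically resolved) gap match the paper's argument and its accompanying discussion.
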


\begin{proof}
The end-to-end classifier has a fixed feature extractor $\Phi$ and a classifier function $F$ that can be optimized over. To determine an upper bound on the attack success rate achievable by a classifier F, we have
\begin{align*}
	ASR((F,\Phi),P,R) &=\underset{x \sim P}{\mathbb{E}}  \left[ \min_{|\tilde{x}-x|\leq R} \bm{1} \left((F,\Phi)(  \tilde{x})=y\right)  \right],\\
	&\leq \underset{\Phi(x) \sim \phi(P)}{\mathbb{E}}  \left[\min_{ ||\Phi(\tilde{x}) - \Phi(x)||_2 \leq \epsilon_{\Phi}}  \bm{1} \left(F(\Phi(\tilde{x}))=y\right)  \right],\\
	&=ASR(F,\Phi(P),\epsilon_{\Phi})
\end{align*}
The $\leq$ arises from a conservative estimate of the distance moved in feature space.
Having transformed the attack success rate calculation to one over the feature space, we can now directly apply Theorem 1 from \cite{bhagoji2019lower}, which gives
\begin{align*}
	\max_F \, ASR(F,\Phi(P),\epsilon_{\Phi}) = \frac{1}{2} \left( 1 + C_{\epsilon_{\Phi}}(\Phi(P_W),\Phi(P_T)) \right).
\end{align*}
From~\cite{bhagoji2019lower},
\begin{multline}
	C(\Phi(P_W),\Phi(P_T))\\= \underset{P_{WT} \in \Pi(P_W,P_T) }{\inf} \underset{(x_W,x_T) \sim P_{WT}}{\mathbb{E}} \left[\bm{1}(|\Phi(x_W)-\Phi(x_T)| \geq 2 \epsilon_{\Phi} )\right],
 \end{multline}

where $\Pi(P_W,P_T)$ is the set of joint distributions over $\mathcal{X}_W \times \mathcal{X}_T$ with marginals $P_W$ and $P_T$.
\end{proof}

The main takeaway from the above theorem is that the better separated the perturbed feature vectors from the two classes are, the higher the attack success rate will be. Thus, from the defender's perspective, the bandwidth $R$ has to be sufficient to ensure that the resulting $\epsilon_{\Phi}$ leads to low separability.

\para{Empirical upper bounds.} We compute the feature space distances between 500 different sets of source $W$ and target $T$ websites and plot the maximum attack success rate as the budget $\epsilon_{\Phi}$ in the feature space is varied (Figure~\ref{fig:theory1}). We use a robust feature extractor trained on the \datasetlarge dataset to derive this upper bound. With a feature space budget of $2.5$, the attack success rate drops to $50\%$, which for a two-class classification problem implies that no classifier can distinguish between the two classes.

Now, it remains to be established that a reasonable input bandwidth overhead can lead to a feature space budget of $2.5$. In Figures~\ref{fig:theory2} and~\ref{fig:theory3}, we see that varying over 20 choices of sources with respect to a fixed target and 100 targets for a fixed source, the minimum distance moved in the feature space is larger than $2.5$, making it a conservative estimate for $\epsilon_{\Phi}$ in Theorem~\ref{thm: upper_bound}.

\para{Remarks} We note that our analysis here is restricted to the $2$-class setting, thus the conclusions drawn may not apply for all source-target pairs. We hypothesize that this explains the lower than $100\%$ protection rate at $R=0.3$  we observe in \S\ref{sec:eval}, since some pairs of classes may not be sufficiently well-separated in feature space. We also note that the analysis used here improves over that of Cherubin \cite{cherubin2017bayes}, where the effectiveness of an attack is only tested against a fixed defense, while we provide optimal bounds in the setting where the attacker and defender can adapt to one another.

\end{document}